\newcommand{\mypar}[1]{\vspace{0.03in}\noindent{\bf #1.}}
\newtheorem{theorem}{Theorem}
\newtheorem{lemma}[theorem]{Lemma}
\newtheorem{definition}[theorem]{Definition}
\newtheorem{assumption}[theorem]{Assumption}
\begin{document}
\title{ Distributed Detection over Random Networks: Large Deviations Performance Analysis}

\author{Dragana Bajovi\'c, Du$\breve{\mbox{s}}$an Jakoveti\'c, Jo\~ao
Xavier, Bruno Sinopoli and Jos\'e M.~F.~Moura 
\thanks{Partially supported by grants SIPM PTDC/EEA-ACR/73749/2006 and
SFRH/BD/33517/2008 (through the Carnegie Mellon/Portugal Program
managed by ICTI) from Funda\c{c}\~{a}o para a Ci\^encia e Tecnologia and
also by
ISR/IST plurianual funding (POSC program, FEDER). Work of Jos\'e~M.~F.~Moura is partially supported by NSF under grants CCF-1011903 and CCF-1018509, and by AFOSR grant
FA95501010291. Dragana Bajovi\'{c} and Du$\breve{\mbox{s}}$an Jakoveti\'c hold fellowships
from the Carnegie Mellon/Portugal Program.}
\thanks{Dragana Bajovi\'{c} and Du$\breve{\mbox{s}}$an Jakoveti\'c are with the
Institute for Systems and Robotics
(ISR), Instituto Superior T\'{e}cnico (IST), Lisbon, Portugal, and with
the Department of Electrical and Computer Engineering, Carnegie Mellon
University, Pittsburgh, PA, USA {\tt\small dbajovic@andrew.cmu.edu,
djakovet@andrew.cmu.edu}}%
\thanks{Jo\~ao Xavier is with the Institute for Systems and Robotics (ISR),
Instituto Superior T\'{e}cnico (IST), Lisbon, Portugal {\tt\small
jxavier@isr.ist.utl.pt}}
\thanks{Bruno Sinopoli and Jos\'e M.~F.~Moura are with the Department of
Electrical and Computer
Engineering, Carnegie Mellon University, Pittsburgh, PA, USA {\tt\small
brunos@ece.cmu.edu, moura@ece.cmu.edu; ph: (412)268-6341; fax: (412)268-3890}}}%
%
%


\maketitle

\begin{abstract}
We study the large deviations performance, i.e., the exponential decay rate of the error probability, of distributed detection algorithms over random networks. At each time step $k$ each sensor: 1) averages its decision variable
with the neighbors' decision variables; and 2) accounts on-the-fly for its new observation. We show that
 distributed detection exhibits a ``phase change'' behavior. When the rate of network information flow (the speed of averaging)
  is above a threshold, then distributed detection is asymptotically equivalent to the optimal centralized detection, i.e.,
  the exponential decay rate of the error probability for distributed detection equals the Chernoff information.
  When the rate of information flow is below a threshold, distributed detection achieves only a fraction of the
  Chernoff information rate; we quantify this achievable rate as a function of the network rate of information flow. Simulation
  examples demonstrate our theoretical findings on the behavior of distributed detection over random networks.
\end{abstract}
\hspace{.43cm}\textbf{Keywords:} Chernoff information, distributed detection,
random network, running consensus, information flow, large deviations.
\newpage
\section{Introduction}
\label{section-intro}

Existing literature on distributed detection can be
broadly divided into three different classes.
The first studies parallel (fusion) architectures, where all sensors
transmit their measurements, or local likelihood ratios, or local
decisions, to a fusion node; the fusion node subsequently
makes the final decision (see, e.g.,~\cite{Varshney-I,Veraavali,Tsitsiklis-detection,Poor-II}.)
The second considers consensus-based detection, where no fusion node is
required, and sensors communicate with single-hop
neighbors only over a generic network (see, e.g.,~\cite{Moura-detection-consensus,moura-cons-detection}).
Consensus-based detection operates in two phases. First, in the sensing phase, each
sensor collects sufficient observations over a period of time. In the second, communication phase,
 sensors subsequently run the consensus algorithm to fuse their local log
likelihood ratios. More recently, a third class of distributed detection has been proposed
(see~\cite{running-consensus-detection,Sayed-detection-2,Sayed-detection},)
where, as with consensus-based detection,
  sensors communicate over a generic network, and no fusion node is
required. Differently than consensus-based
  detection, sensing and communication phases occur in the same time step.
Namely, at each time step $k$,
   each sensor: 1) exchanges its current decision variable with single-hop
neighbors; and 2) processes its new observation, gathered at time step $k$.

In this paper, we focus on the third class of distributed detection, and
we provide fundamental analysis of the large deviation performance,
i.e., of the exponential decay rate of the error probability (as $k \rightarrow \infty$,) when the underlying
communication network is \emph{randomly varying}. Namely, we show that distributed detection over random networks
is asymptotically equivalent to the optimal centralized detection,
if the rate of information flow (i.e., the speed of averaging) across the random network is large enough.
That is, if the rate of information flow is above a threshold, then the exponential
rate of decay of the error probability of distributed detection equals the Chernoff information--the best possible rate of the optimal centralized detector. When the random network has slower information flow (asymptotic optimality cannot be achieved,)
     we find what fraction of the best possible rate of decay of the error probability distributed detection can achieve; hence, we quantify the tradeoff between the network connectivity and achievable detection performance.

Specifically, we consider the problem where sensors
cooperate over a network and sense the environment to decide between two hypothesis.
The network is random, varying over time
$k$ (see, e.g., \cite{weight-opt}); in alternative, the network uses a random communication protocol, like gossip (see, e.g.,~\cite{BoydGossip}).
The network connectivity is described by $\{W(k)\}_{k=1}^\infty$, the sequence of identically distributed (i.i.d.)
consensus weight matrices. The sensors' observations are
Gaussian, correlated in space, and uncorrelated in time. At each time $k$, each sensor: 1)
 communicates with its single-hop neighbors to compute the weighted average of its own and the
 neighbors' decision variables; and 2) accounts for its new observation acquired at time $k$.
  The network's rate of information flow (i.e., the speed of averaging,) is then measured by $|\log r|$, where $r$ is the
   second largest eigenvalue of the expected value of $W(k)^2$. We then show that
    distributed detection exhibits a ``phase change'' behavior. If $|\log r|$ exceeds
   a threshold, then distributed detection is asymptotically optimal.
    If $|\log r|$ is below the threshold, then distributed detection
     achieves only a fraction of the best possible rate; and we evaluate the achievable rate as
      a function of $r$. Finally, we demonstrate by simulation examples our
      theoretical findings on the behavior of distributed detection over random networks.

Several recent references~\cite{Sayed-detection-2,Sayed-detection,running-consensus-detection} consider different variants of distributed detection of the third class. We consider in this paper the running consensus, the variant in~\cite{running-consensus-detection}.

 In the context of estimation, distributed iterative schemes have also been proposed. References~\cite{Sayed-LMS,Sayed-LMS-new}
   propose diffusion type LMS and RLS algorithms
    for distributed estimation; references~\cite{Giannakis-LMS,Giannakis-LMS-2}
    also propose algorithms for distributed estimation, based on the alternating direction method of multipliers.
     Finally, reference~\cite{SoummyaEst} proposes stochastic-approximation
      type algorithm for distributed estimation, allowing for
       randomly varying networks and generic (with finite second moment) observation noise.
       With respect to the network topology and the observation noise, we also
       allow for random networks, but we assume Gaussian, spatially correlated observation noise.

We comment on the differences between this work and reference~\cite{running-consensus-detection}, which also studies asymptotic performance of distributed detection via running consensus, with i.i.d. matrices $W(k)$. Reference~\cite{running-consensus-detection} studies a problem different than ours, in which the means of the sensors' observations under the two hypothesis become closer and closer; consequently, there is an asymptotic, non zero, probability of miss, and asymptotic, non zero, probability of false alarm. Within this framework, the running consensus achieves the efficacy~\cite{Kassam} of the optimal centralized detector, under a mild assumption on the underlying network being connected on average. In contrast, we assume that the means of the distributions do not approach each other as $k$ grows, but stay fixed with $k$. The Bayes error probability exponentially decays to zero, and we examine its rate of decay. We show that, in order to achieve the optimal decay rate of the Bayes error probability, the running consensus needs an assumption \emph{stronger} than connectedness on average, namely, the averaging speed needs to be \emph{sufficiently large} (as measured by $|\log r|$.)

In recent work~\cite{allerton}, we considered running consensus detection when the underlying network is \emph{deterministically} time varying; we showed that asymptotic optimality holds if the graph that collects the union of links that are online at least once over a finite time window is connected. In contrast, we consider here the case when the underlying network or the communication protocol are \emph{random}, and we establish
a sufficient condition for optimality in terms of averaging speed (measured by $|\log r|$.)

\mypar{Paper organization} The next paragraph defines notation that we use throughout
the paper. Section~{II} reviews standard asymptotic results in hypothesis testing, in particular,
 the Chernoff lemma. Section~{III} explains the sensor observations model that we assume and
   studies the optimal centralized detection, as if there was a fusion node
   to process all sensors' observations. Section~{IV}
    presents the running consensus distributed detection algorithm.
    Section~{V} studies the asymptotic performance of distributed detection on a simple, yet illustrative, example
      of random matrices $W(k)$. Section {VI} studies asymptotic performance of distributed detection in the general case. Section {VII}
       demonstrates by simulation examples our theoretical findings. Finally, section {VIII} concludes the paper.

\mypar{Notation} We denote by: $A_{ij}$ or $\left[ A\right]_{ij}$ (as appropriate) the $(i,j)$-th entry of a matrix $A$; $a_i$
 or $[a]_i$ the $i$-th entry of a vector $a$; $I$, $1$, and $e_i$, respectively, the identity matrix, the column vector with unit entries, and the
$i$-th column of $I$, $J$ the $N \times N$ matrix $J:=(1/N)11^\top$; $\| \cdot \|_l$ the vector (respectively, matrix) $l$-norm of its vector (respectively, matrix) argument, $\|\cdot\|=\|\cdot\|_2$ the Euclidean (respectively, spectral) norm of its vector
(respectively, matrix) argument, $\|\cdot\|_F$ the Frobenius norm of a matrix; $\lambda_i(\cdot)$ the $i$-th largest eigenvalue, $\mathrm{Diag}\left(a\right)$ the diagonal matrix with the diagonal equal to the vector $a$; $\mathbb E \left[ \cdot \right]$ and $\mathbb P \left( \cdot\right)$ the expected value and probability, respectively; $\mathcal{I}_{\mathcal{A}}$ the indicator function of the event $\mathcal{A}$; finally, $\mathcal{Q}(\cdot)$ the Q-function, i.e., the function that calculates the right tail probability
 of the standard normal distribution; $\mathcal{Q}(t)=\frac{1}{\sqrt{2 \pi}} \int_{t}^{+\infty} e^{-\frac{u^2}{2}}du$, $t \in \mathbb R$.

%
%
%
\vspace{-3mm}
\section{Preliminaries}
\label{section-preliminaries}
This section reviews standard asymptotic results in hypothesis
 testing, in particular, the Chernoff lemma,~\cite{Cover}; it also
  introduces certain inequalities for the $\mathcal{Q}(\cdot)$ function that we use throughout.  We first formally define the binary hypothesis testing problem
  and the log-likelihood ratio (LLR) test.

\mypar{Binary hypothesis testing problem: Log-likelihood ratio test} Consider the sequence of independent identically distributed (i.i.d.)
$d$-dimensional random vectors (observations) $y(k)$, $k=1,2,...$, and the
binary hypothesis testing problem of deciding whether
 the probability measure generating $y(k)$ is $\nu_0$ (under
hypothesis $H_0$) or $\nu_1$ (under $H_1$).
  Assume that $\nu_1$ and $\nu_0$ are mutually
 absolutely continuous, distinguishable measures. Based on the
 observations $y(1),...,y(k)$, formally, a decision
 test $T$ is a sequence of maps $T_k: {\mathbb R}^{kd} \rightarrow
\{0,1\}$, $k=1,2,...$, with the interpretation that
$T_k(y(1),...,y(k))=l$ means that $H_l$ is decided, $l=0,1$.
Specifically, consider the log-likelihood ratio (LLR) test to decide
between $H_0$ and $H_1$, where the $T_k$ is given as follows:
  \begin{eqnarray}
  \label{eqn-llr-test-generic}
  \mathcal{D}(k)&:=&\frac{1}{k} \sum_{j=1}^k L(k) \\
  T_k &=& \mathcal{I}_{\{\mathcal{D}(k)>\gamma_k\}},
  \end{eqnarray}
 where $L(k):=\log \frac{d \nu_1}{d \nu_0}\left(y(k)\right)$ is the LLR
(given by the Radon-Nikodym derivative of
 $\nu_1$ with respect to $\nu_0$ evaluated at $y(k)$,) and $\gamma_k$ is a
chosen threshold.

%
\mypar{Asymptotic Bayes detection performance: Chernoff lemma}
%
%
Given a test $T$, we are interested in quantifying the detection performance,
 namely, in determining the Bayes error probability after $k$ data (observation) samples are processed:
\begin{equation}
 \label{eqn-bayes-P-e}
P^e(k) = P \left( H_0 \right) \alpha(k) + P \left( H_1 \right) \beta(k),
\end{equation}
where $P \left( H_l \right)$ are the prior probabilities, $\alpha(k)=\mathbb{P}\left( T_k=1|H_0\right)$ and
$\beta(k)=\mathbb{P} \left( T_k=0|H_1\right)$ are, respectively,
the probability of false alarm and the probability of a miss. Generally,
exact evaluation of $\alpha(k)$ and $\beta(k)$ (and hence, $P^e(k)$)
is very hard (as in the case of distributed detection over random networks
 that we study; see also~\cite{Moura-saddle-point} for distributed detection on a parallel architecture.) We seek computationally tractable
estimates of $P^e(k)$, when $k$ grows large. Typically, for large $k$, $P^e(k)$
is a small number (i.e., the detection
   error occurs rarely,) and, in many models, it exponentially decays to zero as $k \rightarrow
+\infty$. Thus, it is of interest
 to determine the (large deviations) rate of exponential decay of $P^e(k)$, given by:
 \begin{equation}
 \label{eqn-lim-p-e}
 \lim_{k \rightarrow \infty} \,-\frac{1}{k} \log P^e(k).
 \end{equation}
Lemma~\ref{chernoff-lemma} (\cite{Cover,DemboZeitouni}) states that, among all
possible decision tests, the LLR test with zero threshold maximizes~\eqref{eqn-lim-p-e}
 (i.e., has the fastest decay rate of $P^e(k)$.)
The corresponding decay rate equals the Chernoff information ${\bf{C}}$, i.e., the Chernoff distance between
the distributions of $y(k)$ under $H_0$ and $H_1$, where ${\bf{C}}$ is given by,~\cite{Cover}:
\begin{equation}
\label{eqn-cher-infor}
{\bf{C}} = \max_{s \in [0,1]} \left\{- \log \int \left(\frac{d \nu_0}{d \nu_1}\right)^{1-s} d \nu_1 \right\}.
\end{equation}
\begin{lemma}[Chernoff lemma]
\label{chernoff-lemma}
If $P(H_0) \in (0,1)$, then:
\begin{equation}
\label{eqn-theorem-chernoff-lemma}
\sup_{T} \left\{\limsup_{k \rightarrow \infty} \,-\frac{1}{k} \log P^e(k)
\right\} = {\bf{C}},
\end{equation}
where the supremum over all possible tests $T$ is attained for the LLR test
with $\gamma_k=0$, $\forall k$.
\end{lemma}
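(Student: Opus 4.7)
The plan is to prove Lemma~\ref{chernoff-lemma} via two standard directions: \emph{achievability}, showing that the LLR test with $\gamma_k = 0$ attains the exponential rate $\mathbf{C}$; and the \emph{converse}, showing that no test can exceed $\mathbf{C}$. The starting observation is that, under either hypothesis, the log-likelihood ratios $L(k) = \log(d\nu_1/d\nu_0)(y(k))$ are i.i.d., so the partial sums $k\,\mathcal{D}(k) = \sum_{j=1}^k L(j)$ fall squarely within the scope of Cram\'er-Chernoff large deviations.

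For achievability, I would apply the exponential Markov inequality to each error probability of the LLR test. For $s \in [0,1]$,
\[
\alpha(k) = \mathbb{P}\bigl( \mathcal{D}(k) > 0 \,\big|\, H_0 \bigr) \,\leq\, \bigl( \mathbb{E}[ e^{s L(1)} \mid H_0 ] \bigr)^k \,=\, \left( \int (d\nu_0/d\nu_1)^{1-s} d\nu_1 \right)^{\!k},
\]
and $\beta(k)$ admits the \emph{same} bound via the tilt $s \mapsto 1-s$. Minimizing over $s \in [0,1]$ reproduces $\mathbf{C}$ from \eqref{eqn-cher-infor} and yields $\max\{\alpha(k), \beta(k)\} \leq e^{-k\mathbf{C}}$. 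Plugging into \eqref{eqn-bayes-P-e} gives $\liminf_k -\tfrac{1}{k}\log P^e(k) \geq \mathbf{C}$.

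For the converse, I would use a change-of-measure argument at the optimal Chernoff tilt. Let $s^\ast \in (0,1)$ be the maximizer in \eqref{eqn-cher-infor} and introduce the tilted single-letter measure $\mu$ via $d\mu/d\nu_0 = e^{\mathbf{C}} (d\nu_1/d\nu_0)^{s^\ast}$, under which $\mathbb{E}_\mu[L(1)] = 0$ follows from the first-order optimality at $s^\ast$. A direct computation gives
\[
\frac{d\nu_0^{\otimes k}}{d\mu^{\otimes k}} = e^{-k\mathbf{C}} e^{-k s^\ast \mathcal{D}(k)}, \qquad \frac{d\nu_1^{\otimes k}}{d\mu^{\otimes k}} = e^{-k\mathbf{C}} e^{k (1-s^\ast) \mathcal{D}(k)}.
\]
For an arbitrary test with $H_0$-acceptance region $A_k$, restricting both change-of-measure integrals to the typical set $B_k = \{|\mathcal{D}(k)| \leq \epsilon\}$ yields $\alpha(k) + \beta(k) \geq e^{-k(\mathbf{C}+\epsilon)} \mu^{\otimes k}(B_k)$, and the weak law of large numbers under $\mu$ forces $\mu^{\otimes k}(B_k) \to 1$. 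Combined with $P^e(k) \geq \min\{P(H_0),P(H_1)\} (\alpha(k) + \beta(k))$, this gives $\limsup_k -\tfrac{1}{k}\log P^e(k) \leq \mathbf{C} + \epsilon$; sending $\epsilon \downarrow 0$ closes the proof.

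The main obstacle is the analytic step of showing that the maximizer $s^\ast$ is interior so that the tilt is well defined. Since $g(s) = \int (d\nu_0/d\nu_1)^{1-s} d\nu_1$ is log-convex on $[0,1]$ with $g(0) = g(1) = 1$, and is non-constant whenever $\nu_0 \neq \nu_1$, the maximizer of $-\log g$ lies in $(0,1)$; together with smoothness of $g$ in the interior this justifies $\mathbb{E}_\mu[L(1)] = 0$. Edge cases of unbounded likelihood ratios can be handled by truncation followed by a limiting argument, without altering the achieved rate.
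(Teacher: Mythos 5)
Your proposal is correct, but there is no in-paper proof to compare it against: the paper states Lemma~\ref{chernoff-lemma} as a known result and simply cites \cite{Cover,DemboZeitouni}. Your argument is the standard self-contained route---the Cram\'er--Chernoff upper and lower bounds specialized to the single point $0$ for the i.i.d.\ partial sums $\sum_{j\leq k} L(j)$---whereas the cited sources reach the statement either through the method of types and the relative-entropy characterization of the exponent (Cover--Thomas, for finite alphabets) or through the full Cram\'er theorem (Dembo--Zeitouni). What your version buys is generality and economy: achievability needs only the exponential Markov inequality together with the identity $\mathbb{E}\left[ e^{sL(1)} \mid H_0\right]=\int (d\nu_0/d\nu_1)^{1-s}\, d\nu_1$, valid for arbitrary mutually absolutely continuous measures; and the converse needs only the tilted law $d\mu = e^{\mathbf{C}} (d\nu_1/d\nu_0)^{s^\ast} d\nu_0$ plus the weak law of large numbers, with no Neyman--Pearson reduction, since your lower bound on $\alpha(k)+\beta(k)$ holds for an arbitrary acceptance region $A_k$. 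The hypothesis $P(H_0)\in(0,1)$ enters exactly where it should, through $P^e(k)\geq \min\{P(H_0),P(H_1)\}\left(\alpha(k)+\beta(k)\right)$. Two details are worth tightening. First, interiority of $s^\ast$: your convexity argument is right, but it is the paper's standing assumption of \emph{mutual absolute continuity} that yields $g(0)=g(1)=1$ (without $\nu_0 \ll \nu_1$ one can have $g(0)<1$), and \emph{distinguishability} that rules out $g\equiv 1$; both should be invoked explicitly. Second, since $g(s)=\mathbb{E}\left[e^{sL(1)}\mid H_0\right]\leq 1$ on $[0,1]$ by H\"older's inequality, $g$ is analytic on $(0,1)$; this simultaneously justifies differentiating under the integral to get $\mathbb{E}_{\mu}[L(1)]=0$ and gives the integrability $\mathbb{E}_{\mu}|L(1)|<\infty$ needed for the WLLN, which makes your closing truncation remark unnecessary.
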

%
%
%
%
%

\mypar{Asymptotically optimal test} We introduce the following definition
of the asymptotically optimal test.
\begin{definition}
\label{def-asym-optimal}
The decision test $T$ is asymptotically optimal if it attains the supremum
in eqn.~\eqref{eqn-theorem-chernoff-lemma}.
\end{definition}
We will find a necessary condition and a sufficient condition for
asymptotic optimality (in the sense of
Definition~\ref{def-asym-optimal}) of the running consensus distributed detection.

\mypar{Inequalities for  the standard normal distribution}
We will use the following property of the $\mathcal{Q}(\cdot)$ function, namely, that for any $t>0$ (e.g., \cite{Harry-van-Trees-book}):
\begin{equation}
\label{eqn-Q-ineq}
\frac{t}{1+t^2} e^{-\frac{t^2}{2}} \leq \sqrt{2 \pi}\, \mathcal{Q}(t)  \leq \frac{1}{t} e^{-\frac{t^2}{2}} .
\end{equation}

\section{Centralized detection}
We proceed with the Gaussian model for which we find (in section~V) conditions for asymptotic optimality of
the running consensus distributed detection. Subsection~\ref{subsect-data-model} describes the model of the sensor observations that we assume. Subsection~{III-B}
  describes the (asymptotically) optimal centralized detection, as if there was a fusion node that collects and processes the observations from all sensors.
\subsection{Sensor observations model}
\label{subsect-data-model}
We assume that $N$ sensors are deployed to sense
the environment and to decide
between the two possible hypothesis, $H_1$ and $H_0$. Each sensor $i$
  measures a scalar quantity $y_i(k)$ at each time step $k$; all sensors
  measure at time steps $k=1,2,...$ Collect $y_i(k)$'s,
   $i=1,...,N$, into $N \times 1$ vector $y(k)=(y_1(k),...,y_N(k))^\top$.
    We assume that $\{y(k)\}$ has the following distribution:
     \begin{equation}
     \mathrm{Under\,\,}H_l:\,y(k) =m_l + \zeta(k), \,\,l=0,1.
     \end{equation}
The quantity $m_l$ is the constant signal; the quantity $\zeta(k)$ is zero-mean, Gaussian, spatially correlated noise,
   i.i.d. across time, with distribution $\zeta(k) \sim \mathcal{N}\left(0, S \right)$,
    where $S$ is a positive definite covariance matrix.
    Spatial correlation of the measurements (i.e., non-diagonal covariance matrix $S$) accounts for, e.g., dense deployment
     in sensor networks.

%
\subsection{(Asymptotically) optimal centralized detection}
This subsection studies optimal centralized detection under the Gaussian assumptions in~\ref{subsect-data-model},
 as if there was a fusion node that collects and processes all sensors' observations. The LLR decision test is given by eqns.~(1)~and~(2), where it is straightforward to show that now the LLR takes the following form:
\begin{equation}
\label{eqn_LLR}
L(k) = (m_1-m_0)^\top S^{-1} \left(   y(k) - \frac{m_1+m_0}{2} \right)
\end{equation}
Conditioned on either hypothesis $H_1$ and $H_0$,
$L(k)\sim\mathcal{N} \left( m_L^{(l)}, \sigma_L^2 \right)$, where
\begin{eqnarray}
\label{eqn-m_L-sigma-L}
m_L^{(1)} &=&  -m_L^{(0)} = \frac{1}{2} (m_1-m_0)^\top S^{-1} (m_1-m_0)\\
\sigma_L^2 &=& (m_1-m_0)^\top S^{-1} (m_1-m_0).
\end{eqnarray}
Define the vector $v \in {\mathbb R}^N$ as
\begin{equation}
\label{eqn-def-v}
v:=S^{-1}(m_1-m_0).
\end{equation}
Then, the LLR $L(k)$ can be written as follows:
\begin{equation}
\label{eqn-L(k)-summable}
L(k) = \sum_{i=1}^N v_i \left( y_i(k) - \frac{[m_1]_i+[m_0]_i}{2} \right)
= \sum_{i=1}^N \eta_i(k)
\end{equation}
Thus, the LLR at time $k$ is separable, i.e., the LLR
is the sum of the terms $\eta_i(k)$ that depend affinely on the individual
observations $y_i(k)$.
 We will exploit this fact in section~{IV} to derive
the distributed, running consensus, detection
 algorithm.

\mypar{Bayes probability of error: finite number of observations} The minimal Bayes error probability, $P^e_{\mathrm{cen}}(k)$, when $k$ samples are processed, and $P(H_0)=P(H_1)=\frac{1}{2}$ (equal prior probabilities), is attained for the (centralized) LLR test with
zero threshold; $P^e_{\mathrm{cen}}(k)$ equals:
 \begin{equation}
 \label{eqn-P-e-centr}
 P^e_{\mathrm{cen}}(k) = \mathcal{Q} \left( \sqrt{k}\frac{m_L^{(1)}}{\sigma_L} \right) .
 \end{equation}
 The quantity $P^e_{\mathrm{cen}}(k)$ will be of interest when we
 compare (by simulation, in Section~{VII}) the running consensus detection with the optimal centralized detection, in the
 regime of finite $k$.

\mypar{Bayes probability of error: time asymptotic results}
The Chernoff lemma (Lemma~\ref{chernoff-lemma}) applies also to the
(centralized) detection problem as defined in subsection {III}-A. It can be shown
 that the Chernoff information, in this case, equals:
 \begin{equation}
 \label{eqn-C-inf-gauss}
 {\bf{C}} = {\bf{C}_{\mathrm{tot}}} = \frac{1}{8} (m_1-m_0)^\top S^{-1} (m_1-m_0).
 \end{equation}
In eqn.~\eqref{eqn-C-inf-gauss}, the subscript $\mathrm{tot}$
 designates the total Chernoff information of the network, i.e.,
  the Chernoff information of the observations collected from all sensors.
   Specifically, if the sensor
   observations are uncorrelated (the noise covariance matrix $S=\mathrm{Diag}(\sigma_1^2,...,\sigma_N^2)$,)
   then:
   \begin{equation}
   \label{C-tot-uncorr}
   {\bf{C}_{\mathrm{tot}}} = \sum_{i=1}^N \frac{[m_1-m_0]_i^2}{8 \sigma_i^2}=\sum_{i=1}^N {\bf{C}_{i}},
   \end{equation}
   where ${\bf{C}}_i$ is the Chernoff information of
   the individual sensor $i$. That is,
   ${\bf{C}}_i$ equals the best achievable
   rate of the Bayes error probability,
   if the sensor $i$ worked as an individual (it did not cooperate with the other sensors.)

  %
  %
  %
  \begin{lemma}(\emph{Chernoff lemma for asymptotically optimal centralized detector})
  \label{lemma-Chernoff-gauss}
  Consider the observation model defined in subsection~{III-A},
   and let $P(H_0) \in (0,1)$. The LLR test with $\gamma_k=0$, $\forall k$, is asymptotically optimal
in the sense of Definition~\ref{def-asym-optimal}. Moreover, for the LLR test with
$\gamma_k=0$, $\forall k$, we have:

  \begin{equation}
  \lim_{k \rightarrow \infty} \frac{1}{k} \log P^e_{\mathrm{cen}}(k) = - { \bf{C}_{\mathrm{tot} }} ,
  \end{equation}
where ${\bf{C}_{\mathrm{tot}}}$ is given by eqn.~\eqref{eqn-C-inf-gauss}.
  \end{lemma}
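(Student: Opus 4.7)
The plan is to leverage the closed-form expression for $P^e_{\mathrm{cen}}(k)$ given in equation~\eqref{eqn-P-e-centr}, combined with the $\mathcal{Q}$-function sandwich bound~\eqref{eqn-Q-ineq}, to directly compute the exponential decay rate; asymptotic optimality then follows as an immediate corollary of the Chernoff lemma (Lemma~\ref{chernoff-lemma}) applied to the Gaussian model.

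First I would set $t_k := \sqrt{k}\, m_L^{(1)}/\sigma_L$, so that $P^e_{\mathrm{cen}}(k) = \mathcal{Q}(t_k)$. Plugging $t_k$ into the two-sided bound~\eqref{eqn-Q-ineq} yields
\begin{equation*}
\frac{t_k}{1+t_k^2}\, e^{-t_k^2/2} \;\le\; \sqrt{2\pi}\, P^e_{\mathrm{cen}}(k) \;\le\; \frac{1}{t_k}\, e^{-t_k^2/2}.
\end{equation*}
Taking logarithms, dividing by $k$, and letting $k \rightarrow \infty$, the polynomial prefactors contribute only $O(\log k)/k \rightarrow 0$, so both bounds collapse to the same limit:
\begin{equation*}
\lim_{k \rightarrow \infty} -\frac{1}{k}\log P^e_{\mathrm{cen}}(k) \;=\; \frac{(m_L^{(1)})^2}{2\sigma_L^2}.
\end{equation*}

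Next I would substitute the Gaussian expressions~\eqref{eqn-m_L-sigma-L}: with $m_L^{(1)} = \frac{1}{2}(m_1-m_0)^\top S^{-1}(m_1-m_0)$ and $\sigma_L^2 = (m_1-m_0)^\top S^{-1}(m_1-m_0)$, a one-line cancellation gives
\begin{equation*}
\frac{(m_L^{(1)})^2}{2\sigma_L^2} \;=\; \frac{1}{8}(m_1-m_0)^\top S^{-1}(m_1-m_0) \;=\; \mathbf{C}_{\mathrm{tot}},
\end{equation*}
which proves the claimed decay rate.

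Asymptotic optimality in the sense of Definition~\ref{def-asym-optimal} then requires only that this limit match the Chernoff information defined in~\eqref{eqn-cher-infor}. This can be verified in two equivalent ways: either by invoking Lemma~\ref{chernoff-lemma} directly (whose hypotheses on mutual absolute continuity are trivially satisfied by the two Gaussian laws $\mathcal{N}(m_0,S)$ and $\mathcal{N}(m_1,S)$, so the zero-threshold LLR test automatically attains the supremum in~\eqref{eqn-theorem-chernoff-lemma}); or by a direct calculation of~\eqref{eqn-cher-infor} for the two equal-covariance Gaussians, where the integrand's symmetry in $s$ forces the optimum at $s=1/2$ and produces precisely $\frac{1}{8}(m_1-m_0)^\top S^{-1}(m_1-m_0)$. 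I do not foresee any real obstacle: the argument is essentially a classical tail estimate of $\mathcal{Q}(\cdot)$ followed by a routine algebraic simplification, with the deeper Chernoff-lemma machinery doing the work of identifying the resulting exponent with the optimal rate.
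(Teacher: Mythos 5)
Your proposal is correct, and it is essentially the argument the paper intends: the paper states Lemma~\ref{lemma-Chernoff-gauss} without an explicit proof, asserting only that the Chernoff information equals \eqref{eqn-C-inf-gauss}, and your route assembles exactly the ingredients the paper has already laid out --- the closed form \eqref{eqn-P-e-centr}, the $\mathcal{Q}$-function sandwich \eqref{eqn-Q-ineq}, the Gaussian parameters \eqref{eqn-m_L-sigma-L}, and Lemma~\ref{chernoff-lemma}. Note that your direct computation in fact buys something Lemma~\ref{chernoff-lemma} alone does not: the Chernoff lemma controls only the $\limsup$, whereas the two-sided $\mathcal{Q}$-bound establishes the existence of the full limit claimed in the lemma, so this step is not redundant. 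One small point you should make explicit: eqn.~\eqref{eqn-P-e-centr} is stated in the paper under equal priors, while the lemma allows any $P(H_0)\in(0,1)$; this is harmless because, by the symmetry $m_L^{(0)}=-m_L^{(1)}$, the zero-threshold test satisfies $\alpha(k)=\beta(k)=\mathcal{Q}\left(\sqrt{k}\,m_L^{(1)}/\sigma_L\right)$, so $P^e_{\mathrm{cen}}(k)=P(H_0)\alpha(k)+P(H_1)\beta(k)=\mathcal{Q}\left(\sqrt{k}\,m_L^{(1)}/\sigma_L\right)$ for arbitrary priors, and your computation goes through unchanged.
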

\section{Distributed detection}
We now consider distributed detection, under the same assumptions on the sensor observations as in~\ref{subsect-data-model}; but
the fusion node is no longer available, and the sensors cooperate through a randomly varying network. Specifically,
  we consider the running consensus distributed detection, proposed in~\cite{running-consensus-detection}, and we extend
  it to spatially correlated observations.
   At each time $k$, each sensor $i$ improves its decision variable,
    call it $x_i(k)$, two-fold: 1) by exchanging the decision variable
     locally with its neighbors and computing the weighted average
     of its own and the neighbors' variables; and 2)
      by incorporating its new observation at time $k$.

      Recall the definition of the vector $v$ in eqn.~\eqref{eqn-def-v} and the scalar
      $\eta_i(k)$ in eqn.~\eqref{eqn-L(k)-summable}. The update of $x_i(k)$ is then as follows:
\begin{eqnarray}
\label{eqn-running-cons-sensor-i}
x_i(k+1) &=& \frac{k}{k+1}  \left( W_{ii}(k)x_i(k)+ \sum_{j \in
O_i(k)}W_{ij}(k) x_j(k) \right)+ \frac{1}{k+1}  \eta_i(k+1), \, k=1,...\\
 x_i(1) &=&  \eta_i(1). \nonumber
\end{eqnarray}
Here $O_i(k)$ is the (random) neighborhood of sensor $i$ at time $k$, and
$ W_{ij}(k)$ are the (random) averaging weights.\footnote{We remark that,
 to implement the algorithm, sensor $i$ has to know the quantities $v_i:=\left[S^{-1}(m_1-m_0)\right]_i$, $[m_1]_i$ and $[m_0]_i$; this
   knowledge can be acquired in the training period of the sensor network.} The local sensor $i$'s decision test at time $k$, $T_{k,i}$, is given by:
\begin{equation}
\label{eqn-decision-x-i}
T_{k,i}:=\mathcal{I}_{\{x_i(k)>0\}},
\end{equation}
i.e., $H_1$ (resp. $H_0$) is decided when $x_i(k)>0$ (resp. $x_i(k) \leq 0$.)
Let
$x(k) =
(x_1(k),x_2(k),...,x_N(k))^\top$ and
$\eta(k)=(\eta_1(k),...,\eta_N(k))^\top$.
Also, collect the averaging weights $W_{ij}(k)$ in $N \times N$
  matrix $W(k)$, where, clearly, $W_{ij}(k)=0$ if the sensors
   $i$ and $j$ do not communicate at time step $k$.
The algorithm in matrix form becomes:
\begin{eqnarray}
\label{eqn_recursive_algorithm}
x(k+1) &=& \frac{k}{k+1} W(k) x(k) + \frac{1}{k+1}  \eta(k+1),\, k=1,...\\
 x(1)  &=&   \eta(1). \nonumber
\end{eqnarray}

We remark that the algorithm in~\eqref{eqn_recursive_algorithm}
extends the running consensus algorithm in~\cite{running-consensus} for spatially correlated
sensor observations (non-diagonal covariance matrix $S$.) When $S$ is diagonal,
the algorithm in~\eqref{eqn_recursive_algorithm} reduces to the algorithm in~\cite{running-consensus}.\footnote{Another
minor difference between~\cite{running-consensus} and eqn.~\eqref{eqn_recursive_algorithm}
is that~\cite{running-consensus} multiplies the log-likelihood ratio term (the term analogous to $\frac{1}{k+1}  \eta(k+1)$)
by $N$; this multiplication does not affect detection performance.}

We allow the averaging matrices
 $W(k)$ to be random. Formally,
 let $(\Omega, \mathcal{F}, \mathbb{P})$
 be a probability space (where $\Omega$
  is a sample space, $\mathcal{F}$
  is a $\sigma$-algebra, and $\mathbb{P}:\mathcal{F} \rightarrow [0,1]$
   is a probability measure.)
    For any $k$,
     $W(k)$
      is a random variable, i.e.,
      an $\mathcal{F}$-measurable
      function $W(k)=W(\omega;k)$, $\omega \in \Omega$, $W(k): \Omega \rightarrow {\mathbb R}^{N \times N}$.
       We now summarize the assumptions on $W(k)$. Recall that $J:=\frac{1}{N}(11^\top)$ and denote by $\widetilde{W}(k):=W(k)-J$. From now on, we will drop the index $k$ from $W(k)$ and $\widetilde{W}(k)$ when
   we refer to the distribution of $W(k)$ and $\widetilde{W}(k)$.
\begin{assumption}
\label{assumption-W(k)}
For the sequence of matrices $\left\{ W(k)  \right\}_{k=1}^{\infty}$, we
assume the following:
\begin{enumerate}
\item The sequence $\left\{ W(k)  \right\}_{k=1}^{\infty}$ is i.i.d.
\item $W$ is symmetric and stochastic (row-sums are equal to 1 and the entries are
nonnegative,) with probability one.
\item The random matrix $W(l)$ and the random vector $y(s)$ are
independent, $\forall l$, $\forall s$.
\end{enumerate}
\end{assumption}
In sections V and VI, we examine
 what (additional) conditions the matrices $W(k)$
  have to satisfy, to achieve asymptotic optimality of
   the distributed detection algorithm.

\mypar{Network supergraph} Define also the network supergraph
 as a pair $G:=(\mathcal{V}, E)$, where $\mathcal{V}$ is the set of nodes with
 cardinality $|\mathcal{V}|=N$, and $E$ is the set of edges with cardinality
 $|E|=M$, defined by: $E=\{ \{i,j\}:\,\, \mathbb{P}\left( W_{ij} \neq 0\right) > 0,\,\,i< j\}.$
  Clearly, when, for some $\{i,j\}$, $\mathbb{P} \left( W_{ij} \neq 0\right) =0 $,
  then the link $\{i,j\} \notin E$ and nodes $i$ and $j$ never communicate.


   For subsequent analysis, it will be useful
    to define the matrices $\Phi(k,j)$, for $k>j \geq 1$, as follows:
\begin{equation}
\label{eqn-def-Phi}
\Phi(k,j):=W(k-1)W(k-2)...W(j).
\end{equation}
Then, the algorithm in eqn.~\eqref{eqn_recursive_algorithm} can be
written as:
\begin{equation}
\label{eqn_x(k)-equation}
x(k) = \frac{1}{k} \sum_{j=1}^{k-1} \Phi(k,j) \eta(j) + \frac{1}{k}
\eta(k),\,\,k=2,3,...
\end{equation}
Also, introduce:
\begin{equation}
\label{eqn-tilde-phi}
\widetilde{\Phi}(k,j):=\widetilde{W}(k-1) \widetilde{W}(k-2) ... \widetilde{W}(j), \,\,k > j \geq 1,
\end{equation}
and remark that
\[
\widetilde{\Phi}(k,j) = \Phi(k,j)-J.
\]

Recall the definition of the $N\times 1$ vector $v$ in~\eqref{eqn-def-v}.
The  sequence of $N\times 1$ random vectors $\{\eta(k)\}$, conditioned on
$H_l$, is i.i.d. The vector $\eta(k)$ (under hypothesis $H_l$, $l=0,1$) is Gaussian with mean
$m_{\eta}^{(l)}$ and
covariance $S^{\eta}$:
\begin{eqnarray}
\label{eqn_mu_sigma}
m_{\eta}^{(l)} &=& (-1)^{(l+1)} \mathrm{Diag} \left( v
\right)\,\frac{1}{2}(m_1-m_0)\\
S^{\eta} &=& \mathrm{Diag} \left( v\right)  S  \mathrm{Diag} \left( v\right).
\end{eqnarray}
Here $\mathrm{Diag}(v)$ is a diagonal matrix with the diagonal entries
equal to the entries of $v$.

\section{Asymptotic performance of distributed detection: Switching fusion example}
\label{section-example}
In this section, we examine asymptotic performance of distributed detection algorithm on a simple and impractical,
 yet illustrative example; we tackle the generic case in Section~{VI}. The network at a time step $k$ can either be fully connected, with probability $p$,
 or completely disconnected (without edges,) with probability $1-p$. Specifically, the distribution of the random
 averaging matrix $W(k)$ is given by:
  \begin{equation}
  \label{eqn-W-distr}
  W(k) = \left\{ \begin{array}{ll}
 J &\mbox{ with prob. $p$} \\
  I &\mbox{ with prob. $1-p$.}
  \end{array} \right.
  \end{equation}
With model~\eqref{eqn-W-distr}, at each time step $k$, each sensor
behaves as a fusion node, with probability $p$, and as an individual detector, with probability $1-p$.
 We call this communication model the switching fusion. We show that, to achieve asymptotic optimality of distributed detection,
 the fusion step ($W(k)=J$) should occur sufficiently often, i.e., $p$ should exceed a threshold.
 Namely, we find necessary and sufficient condition for the asymptotic optimality in terms of $p$. When distributed detection is not optimal ($p$ is below the threshold,) we find the achievable
  rate of decay of the error probability, as a function of $p$. The goal of the switching fusion example is two-fold.
   First, it provides insight on how the amount of communication (measured by $p$) affects detection performance. Second, it explains in a clear and natural way our methodology for quantifying detection performance
   on generic networks (in Section~{VI}.) Namely, Section~{VI} mimics and extends the analysis from Section~{V} to derive distributed detection performance on generic networks. We next detail the
    sensor observations model.

  We assume that the observations $y_i(k)$ of $N$ different sensors
  are uncorrelated, and that the individual Chernoff information, given by eqn.~\eqref{C-tot-uncorr}, is the same at each sensor $i$.
    Hence, we have ${\bf{C}_{\mathrm{tot}}}=N\,\bf{C_i}$.
     We assume that, at time instant $k$, the network can either be fully connected, with
 probability $p$, or without edges, with probability $1-p$.

 Denote by $P^e_{i,\mathrm{dis}} (k)$ the Bayes error probability at sensor $i$, after $k$
 samples are processed. We have the following Theorem on the asymptotic performance of the distributed detection algorithm.
\begin{theorem}
\label{theorem-neces-suf-special-case}
Consider the distributed detection algorithm given by eqns. \eqref{eqn-running-cons-sensor-i} and \eqref{eqn-decision-x-i}. Assume that
 the sensor observations are spatially uncorrelated and
 that the Chernoff information ${\bf{C_i}}$ is equal at each sensor $i$. Let $W(k)$
 be i.i.d. matrices with the distribution given by eqn.~\eqref{eqn-W-distr}. Then,
 the exponential decay rate of the error probability is given by:

 \begin{eqnarray}
 \label{eqn-teorema-rate}
\lim_{k \rightarrow \infty}\,-\frac{1}{k} \log P^e_{i,\mathrm{dis}}(k) =  \left\{ \begin{array}{ll}
{\bf{C_{\mathrm{tot}}}} &\mbox{ if $|\log(1-p)| \geq {\bf{C_{\mathrm{tot}}}}(N-1)$} \\
  {{\bf{C_i}}} +  |\log(1-p)|  &\mbox{ if $|\log(1-p)| \leq \frac{{\bf{C_{\mathrm{tot}}}}(N-1)}{N^2}$} \\
  2 \sqrt{\frac{|\log(1-p)|{\bf{C_{\mathrm{tot}}}}}{N-1}} - \frac{|\log(1-p)|}{N-1}  &\mbox{ otherwise.}
       \end{array} \right.
\end{eqnarray}

 Moreover, a necessary and sufficient condition for
  asymptotic optimality, in the sense of Definition~\ref{def-asym-optimal}, is given by:
  \begin{equation}
  \label{eqn-necess-suff-special-case}
  \frac{|\log(1-p)|}{N-1} \geq {\bf{C_{\mathrm{tot}}}}=N {\bf{C_i}}.
  \end{equation}
\end{theorem}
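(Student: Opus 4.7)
The key algebraic observation is that the two possible values of $W(k)$ form a multiplicative semigroup: $I\cdot I=I$ while $I\cdot J=J\cdot I=J\cdot J=J$. Consequently, each product $\Phi(k,j)=W(k-1)\cdots W(j)$ is either $I$ (if every factor equals $I$) or $J$ (otherwise). Define
\[
K := \max\bigl\{\,m\in\{1,\ldots,k-1\}\,:\,W(m)=J\,\bigr\},
\]
with the convention $K=0$ if the set is empty. Then $\Phi(k,j)=J$ for $j\le K$ and $\Phi(k,j)=I$ for $j>K$, and $K$ has a simple (truncated geometric) law: $\mathbb{P}(K=0)=(1-p)^{k-1}$ and $\mathbb{P}(K=\tau)=p(1-p)^{k-1-\tau}$ for $\tau\in\{1,\ldots,k-1\}$.

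Next, I will use the Gaussian observation model together with the equal-$\mathbf{C}_i$, spatially-uncorrelated assumption to identify the conditional law of $x_i(k)$ given $K$. Under these assumptions $\eta_i(j)\sim\mathcal{N}\bigl(\pm 4\mathbf{C}_i,\,8\mathbf{C}_i\bigr)$ with the same parameters for every sensor, so $\bar\eta(j):=\tfrac{1}{N}\sum_\ell \eta_\ell(j)\sim\mathcal{N}(\pm 4\mathbf{C}_i,\,8\mathbf{C}_i/N)$. Inserting the structure of $\Phi(k,j)$ into \eqref{eqn_x(k)-equation} gives
\[
x_i(k)=\frac{1}{k}\Bigl[\sum_{j=1}^{K}\bar\eta(j)+\sum_{j=K+1}^{k}\eta_i(j)\Bigr],
\]
so the conditional mean equals $\pm 4\mathbf{C}_i$ regardless of $K$, while the conditional variance is $\frac{8\mathbf{C}_i}{k}\bigl(1-\tfrac{K}{k}\tfrac{N-1}{N}\bigr)$. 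Therefore
\[
P^e_{i,\mathrm{dis}}(k)\;=\;\mathbb{E}\!\left[\mathcal{Q}\!\left(\sqrt{\frac{2k\,\mathbf{C}_i}{1-(K/k)(N-1)/N}}\right)\right].
\]

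I then extract the large-deviation rate by Laplace's method applied to the sum over $\tau\in\{0,1,\ldots,k-1\}$. Writing $\alpha=\tau/k$ and invoking the two-sided bound \eqref{eqn-Q-ineq} to strip off polynomial prefactors in $k$, both the upper and the lower bounds on $\log P^e_{i,\mathrm{dis}}(k)/(-k)$ converge to
\[
r(p):=\min_{\alpha\in[0,1]}\,f(\alpha),\qquad f(\alpha):=(1-\alpha)\lambda+\frac{\mathbf{C}_i}{1-\alpha\,\beta},
\]
where $\lambda:=|\log(1-p)|$ and $\beta:=(N-1)/N$. For the upper bound I sum exponentials and use $\sum_\tau e^{-kf(\tau/k)}\le k\exp(-k\min f)$; for the lower bound I retain the single term nearest the minimizer, which dominates exponentially. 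Since $f$ is strictly convex on $[0,1)$, the minimizer is either the interior critical point $\alpha^\ast=\beta^{-1}\bigl(1-\sqrt{\mathbf{C}_i\beta/\lambda}\bigr)$, or a boundary point when $\alpha^\ast\notin(0,1)$. Checking when $f'(0)\ge 0$ gives $\alpha^\ast=0$ and $r(p)=\lambda+\mathbf{C}_i$, corresponding to $\lambda\le\mathbf{C}_i\beta=\mathbf{C}_{\mathrm{tot}}(N-1)/N^2$; when $f'(1)\le 0$ gives $\alpha^\ast=1$ and $r(p)=\mathbf{C}_i/(1-\beta)=N\mathbf{C}_i=\mathbf{C}_{\mathrm{tot}}$, corresponding to $\lambda\ge\mathbf{C}_i\beta N^2=(N-1)\mathbf{C}_{\mathrm{tot}}$; and otherwise the interior minimum plugged into $f$ yields, after straightforward algebra, $r(p)=2\sqrt{\lambda\mathbf{C}_{\mathrm{tot}}/(N-1)}-\lambda/(N-1)$. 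These three cases coincide exactly with \eqref{eqn-teorema-rate}, and the necessary-and-sufficient condition \eqref{eqn-necess-suff-special-case} is precisely the inequality that places the optimization in Case A, where the rate equals $\mathbf{C}_{\mathrm{tot}}$.

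The main technical obstacle is step involving the Laplace-method matching of upper and lower bounds: the lower bound on $\mathcal{Q}$ in \eqref{eqn-Q-ineq} contributes a factor decaying like $1/t$, and one must verify that choosing a single $\tau$ in a shrinking neighborhood of $k\alpha^\ast$ (rounding to an integer, and distinguishing the boundary cases $\alpha^\ast\in\{0,1\}$ where the distribution of $K$ has an atom or puts its mass near the edge) still yields the exponential rate $r(p)$ and not something strictly smaller. Once this matching is established, the three-case formula and the asymptotic-optimality criterion follow by routine verification of the convex optimization above.
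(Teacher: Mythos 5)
Your proposal is correct and takes essentially the same route as the paper's own proof: your last-fusion-time variable $K$ is exactly the paper's partition into events $A_l$, your mode function $f(\alpha)=(1-\alpha)\lambda+\mathbf{C}_i/(1-\alpha\beta)$ (with $\alpha=K/k$, $\lambda=|\log(1-p)|$, $\beta=(N-1)/N$) is precisely the limit of the paper's $\phi(j;k)$ under the substitution $j=k-K-1$, and your Laplace-method matching via the two-sided $\mathcal{Q}$-bounds in \eqref{eqn-Q-ineq} (upper bound by $k$ times the slowest mode, lower bound by the single dominant term) is the paper's Appendix argument for inequalities \eqref{eqn-hat-phi-limsup}. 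Your convex three-case minimization, including the boundary conditions $f'(0)\geq 0$ and $f'(1)\leq 0$, reproduces the paper's computation of $\phi^\star$ in eqns.~\eqref{eqn-phi-star}--\eqref{eqn-phi-s} and the optimality threshold \eqref{eqn-necess-suff-special-case}.
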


Condition~\eqref{eqn-necess-suff-special-case} says that the network connectivity should be
  good enough (i.e., $p$ should be large enough,) in order to achieve the asymptotic optimality
   of distributed detection. Also, there is a ``phase change'' behavior, in a sense that
    distributed detection is asymptotically optimal above a threshold on $p$, and
    it is not optimal below that threshold. Further, we can see that, as
    $p$ decreases, distributed detection performance becomes worse and worse, and it
    approaches the performance of an individual sensor-detector. (See Figure 1, and eqn. \eqref{eqn-teorema-rate}.)


      We proceed with proving Theorem~\ref{theorem-neces-suf-special-case}. In Section {VI}, we will
      follow a reasoning similar to the proof of Theorem~\ref{theorem-neces-suf-special-case}
       to provide a sufficient condition for asymptotic optimality on generic networks.

       \begin{figure}[thpb]
      \centering
      \includegraphics[height=2.
      in,width=3.1in]{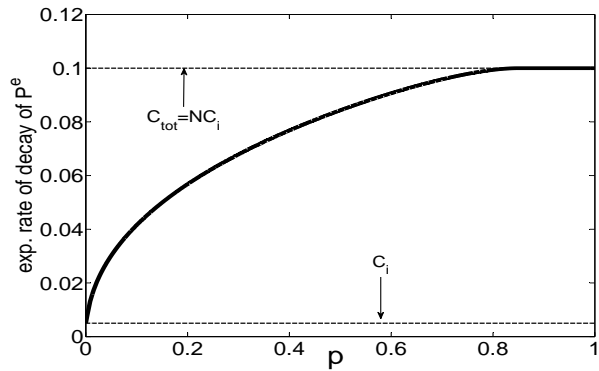}
      \caption{ Exponential decay rate of error probability $\phi^\star$ (given by eqn. (26))
       for the model considered in Section \ref{section-example}. The network has $N=20$ sensors and
       ${\bf{C_{\mathrm{tot}}}}=0.1$. The optimal rate (equal to ${\bf{C_{\mathrm{tot}}}}$) is achieved for
       $p \geq 0.83$.}
      \label{Figure_rate_vs_p}
\end{figure}

\begin{proof}[Proof of Theorem~\ref{theorem-neces-suf-special-case}]
First, remark that $x(k)$, conditioned on $H_0$,
is equal in distribution to $-x(k)$, conditioned on $H_1$. This is true
 because $\eta(k)$, conditioned on $H_0$, is equal in distribution
  to $-\eta(k)$, conditioned on $H_1$, for all $k$; and the distribution of $W(k)$
   does not depend on the active hypothesis, $H_0$ or $H_1$.
   Denote by $\mathbb{P}_l\left( \cdot\right)=\mathbb{P}(\cdot|H_l)$, $l=0,1$, and
   consider the probability of false alarm,
   the probability of miss, and the Bayes error probability at sensor $i$
    (with the running consensus detector,) respectively, given by:
    \begin{eqnarray}
    \alpha_{i,\mathrm{dis}}(k) &=& \mathbb{P}_0 \left(  x_i(k)>0\right),\,\,
    \beta_{i,\mathrm{dis}}(k) = \mathbb{P}_1 \left(  x_i(k)\leq 0\right)\\
    \label{eqn-p-e-dis}
    P^e_{i,\mathrm{dis}}(k) &=& P(H_0)\mathbb{P}_0 \left(  x_i(k)>0\right) + P(H_1) \mathbb{P}_1 \left(  x_i(k)\leq 0\right).
    \end{eqnarray}
   Remark that $\mathbb{P}_l \left( x_i(k)=0\right)=0$, $l=0,1$.
   Thus, we have that
   \begin{eqnarray}
   \label{eqn-equal-alpha-beta}
   \beta_{i,\mathrm{dis}}(k) &=& \mathbb{P}_1  \left( x_i(k) \leq 0\right) = \mathbb{P}_0 \left(  x_i(k)>0\right) = \alpha_{i,\mathrm{dis}}(k), \,\, \forall k,\,\,\forall i.
   \end{eqnarray}
   From eqns. \eqref{eqn-p-e-dis} and \eqref{eqn-equal-alpha-beta}, it can be shown that:
   \begin{eqnarray}
   \label{eqn-limsup-p-e}
   \limsup_{k \rightarrow \infty} \frac{1}{k} \log P^e_{i,\mathrm{dis}}(k)  &=&  \limsup_{k \rightarrow \infty} \frac{1}{k} \log \alpha_{i,\mathrm{dis}}(k)\\
   \label{eqn-liminf-p-e}
   \liminf_{k \rightarrow \infty} \frac{1}{k} \log P^e_{i,\mathrm{dis}}(k)  &=&  \liminf_{k \rightarrow \infty} \frac{1}{k} \log \alpha_{i,\mathrm{dis}}(k).
   \end{eqnarray}

   We further assume that $H_0$ is true, and we restrict our attention to $\alpha_{i,\mathrm{dis}}(k)$,
    but the same conclusions (from \eqref{eqn-equal-alpha-beta}) will be valid for $\beta_{i,\mathrm{dis}}(k)$ also.
     We now make the key step in proving Theorem~\ref{theorem-neces-suf-special-case},
     by defining a partition of the probability space $\Omega$. Fix the time step $k$ and denote by
 $A_l$, $l=0,...,k-1$, the event
 \[
 A_l =
 \left\{ \begin{array}{ll}
 \left\{ \max \left\{{s \in\{1,...,k-1\}}:\,\,W(s)=J \right\}=l  \right\} &\mbox{ for $l=1,...,k-1$} \\
 \{W(s)=I,\,s=1,...,k-1\}&\mbox{ for $l=0.$}
       \end{array} \right.
\]
That is, $A_l$ is the event that the largest time step $s \leq k-1$, for which
 $W(s)=J$, is equal to $s=l$. (The event $A_l$ includes the scenarios of arbitrary realizations of $W(s)$--either $J$ or $I$--for
 $s\leq l$; but it requires $W(s)=I$ for all $l<s \leq k-1$.) Remark that $A_l$ is a function of $k$, but the dependence on $k$ is dropped for
notation simplicity. We have that $\mathbb{P}  \left(  A_l \right) =  p(1-p)^{k-l-1}$, for $l=1,...,k-1$,
 and $\mathbb{P}\left( A_0\right)=(1-p)^{k-1}$. Also,
  each two events, $A_l$ and $A_j$, $j \neq l$, are disjoint, and
   $\cup_{l=0}^{k-1}\,A_l=\Omega,$ i.e., the events
    $A_l$, $l=0,...,k-1$, constitute a finite partition of
    the probability space $\Omega$. (Note that $\sum_{l=0}^{k-1}\mathbb{P}(A_l)=1$.) Recall the definition of
    $\Phi(k,j)$ in eqn.~\eqref{eqn-def-Phi} and note that, if $A_l$ occurred,
we have:
 \begin{equation}
  \Phi(k,s) = \left\{ \begin{array}{ll}
 I &\mbox{ if $k-1 \geq s>l$} \\
  J &\mbox{ if $s \leq l$.}
       \end{array} \right.
  \end{equation}
Further, conditioned on $A_l$,
we have that (when $l=0$, first sum in eqn. \eqref{eqn-prva-druga} does not exist)
\begin{eqnarray}
\label{eqn-prva-druga}
x(k) &= &   \frac{1}{k} \left( \sum_{j=1}^l J \eta(j) + \sum_{j=l+1}^{k} I \eta(j) \right)\\
&=&         \frac{1}{k} \left( \sum_{j=1}^l \left( \frac{1^\top \eta(j)}{N} \right)1 + \sum_{j=l+1}^{k}\eta(j)  \right).\nonumber
\end{eqnarray}
Hence, conditioned on $A_l$, $x_i(k)$
 is a Gaussian random variable,
 \[x_i(k)|A_l \sim \mathcal{N}  \left( \theta(l;k),\,\zeta^2(l;k) \right),\]
 where
 \begin{eqnarray}
 \theta(l;k) &=&  -\frac{4l}{N}{\bf{C_{\mathrm{tot}}}} -4(k-l) {\bf{C_i}} = -4 k {\bf{C_i}}\\
 \zeta^2(l;k)  &=&  \frac{8l}{N^2} {\bf{C_{\mathrm{tot}}}} + 8(k-l) {\bf{C_i}}.
  \end{eqnarray}
Define
\begin{equation}
\label{eqn-def-chi}
\chi(l;k):=-\frac{\theta(l;k)}{\zeta(l;k)} = \frac { \sqrt{2{\bf{C_i}}} {k} } {  \sqrt{ \frac{l}{N} + (k-l)} },
\end{equation}
and remark that
\[
\mathbb{P}_0 \left( x_i(k)>0  |A_l \right) = \mathcal{Q}(\chi(l;k)),
\]
where $\mathbb{P}_0(\cdot) := \mathbb{P}(\cdot|H_0)$.

Using the total probability, we can write $\alpha_{i,\mathrm{dis}}(k)=\mathbb{P}_0 \left( x_i(k)>0  \right)$ as:
\begin{eqnarray}
\label{eqn-alpha-total-prob-law}
\alpha_{i,\mathrm{dis}}(k) &=&  \sum_{l=0}^{k-1} \mathbb{P}_0\left( x_i(k)>0 | A_l \right)\, \mathbb{P}(A_l)\\
&=&\sum_{l=1}^{k-1}  \mathcal{Q}(\chi(l;k)) \,p(1-p)^{k-l-1}+ \mathcal{Q}(\chi(0;k)) \,(1-p)^{k-1}. \nonumber
\end{eqnarray}

We now proceed with calculating the exponential rate of decay of
$\alpha_{i,\mathrm{dis}}(k)$ (and hence, $P^e_{i,\mathrm{dis}}(k)$) as $k \rightarrow \infty$. The key ingredient to do that is the representation of $\alpha_{i,\mathrm{dis}}(k)$ in eqn. \eqref{eqn-alpha-total-prob-law}, and the inequalities for the $\mathcal{Q}$-function in eqn. \eqref{eqn-Q-ineq}.
 Namely, it can be shown that, when $k$ grows large, $\alpha_{i,\mathrm{dis}}(k)$, and hence, $P^e_{i,\mathrm{dis}}(k)$, behaves as (here we present the main
 idea but precise statements are in the Apendix):
 \begin{eqnarray}
 P^e_{i,\mathrm{dis}}(k) \sim \sum_{j=0}^{k-1} e^{-k\,\phi(j;k)},
 \end{eqnarray}
where
\begin{eqnarray}
\label{eqn-def-phi-j-k}
\phi(j;k):= \frac{{\bf{C_{\mathrm{tot}}}}} {1+(N-1)\frac{j+1}{k}} + \frac{j}{k} |\log(1-p)| .
\end{eqnarray}
Hence, the quantities $\phi(j;k)$, for different $j$'s, represent different ``modes''
 of decay; the decay of $ P^e_{i,\mathrm{dis}}(k)$ is then determined by the slowest mode $\widehat{\phi}(k)$, defined by:
\begin{equation}
\widehat{\phi}(k) = \min_{j=0,...,k-1} \phi(j;k) .
\end{equation}
More precisely, using eqns. \eqref{eqn-limsup-p-e}, \eqref{eqn-liminf-p-e},
the expression for $\alpha_{i,\mathrm{dis}}(k)$ in eqn. \eqref{eqn-alpha-total-prob-law}, and
 the inequalities \eqref{eqn-Q-ineq}, it can be shown that:
\begin{eqnarray}
\label{eqn-hat-phi-limsup}
\liminf_{k \rightarrow \infty} \,-\frac{1}{k} \log P^e_{i,\mathrm{dis}}(k) &\geq&  \liminf_{k \rightarrow \infty} \widehat{\phi}(k)\\
\limsup_{k \rightarrow \infty} \,-\frac{1}{k} \log P^e_{i,\mathrm{dis}}(k) &\leq&  \limsup_{k \rightarrow \infty} \widehat{\phi}(k). \nonumber \nonumber
\end{eqnarray}
The detailed proof of inequalities \eqref{eqn-hat-phi-limsup} is in the Appendix.

We proceed by noting that the minimum of $\phi(j;k)$ over
the discrete set $j \in \{0,1,...,k-1\}$ does not differ much
from the minimum of $\phi(j;k)$ over the interval $[0,k-1]$.
Denote by $\phi^\star(k)$ the minimum of $\phi(j;k)$ over $[0,k-1]$:
\begin{eqnarray}
\label{eqn-hat-phi}
\phi^\star(k)  &=&  \min_{j \in [0,k-1]} \phi(j;k).
\end{eqnarray}

Then, it is easy to verify that:
\begin{eqnarray}
\label{eqn-ineq-phi}
\phi^\star(k) \leq \widehat{\phi}(k) \leq \phi^\star(k)\,(1+\frac{N-1}{k})+\frac{|\log(1-p)|}{k}.
\end{eqnarray}

The function $\phi(j,k)$ is convex in its first argument on $j \in [0,k-1]$; it is straightforward to calculate $\phi^\star(k)$, which can be shown to be equal to:

\begin{eqnarray}
\label{eqn-phi-star}
\phi^\star(k) =  \left\{ \begin{array}{ll}
\frac{{\bf{C_{\mathrm{tot}}}}}{1+\frac{N-1}{k}} &\mbox{ if $|\log(1-p)| \geq \frac{{\bf{C_{\mathrm{tot}}}} (N-1)}{(1+\frac{N-1}{k})^2}$} \\
  \frac{{\bf{C_{\mathrm{tot}}}}}{N} + \frac{k-1}{k} |\log(1-p)|  &\mbox{ if $|\log(1-p)| \leq \frac{{\bf{C_{\mathrm{tot}}}}(N-1)}{N^2}$} \\
  2 \sqrt{\frac{|\log(1-p)|{\bf{C_{\mathrm{tot}}}}}{N-1}} - \frac{|\log(1-p)|}{N-1} - \frac{|\log(1-p)|}{k} &\mbox{ otherwise.}
       \end{array} \right.
\end{eqnarray}

The limit $\lim_{k \rightarrow \infty} \phi^\star(k)=: \phi^\star$ exists, and is equal to:

\begin{eqnarray}
\label{eqn-phi-s}
\phi^\star =  \left\{ \begin{array}{ll}
{\bf{C_{\mathrm{tot}}}} &\mbox{ if $|\log(1-p)| \geq {\bf{C_{\mathrm{tot}}}}(N-1)$} \\
  \frac{{\bf{C_{\mathrm{tot}}}}}{N} +  |\log(1-p)|  &\mbox{ if $|\log(1-p)| \leq \frac{{\bf{C_{\mathrm{tot}}}}(N-1)}{N^2}$} \\
  2 \sqrt{\frac{|\log(1-p)|{\bf{C_{\mathrm{tot}}}}}{N-1}} - \frac{|\log(1-p)|}{N-1}  &\mbox{ otherwise.}
       \end{array} \right.
\end{eqnarray}

From eqns. \eqref{eqn-ineq-phi} and \eqref{eqn-phi-s}, we have:
\begin{equation}
\label{eqn-sad}
\lim_{k \rightarrow \infty} \widehat{\phi}(k) = \lim_{k \rightarrow \infty} \phi^\star(k) =  \phi^\star.
\end{equation}

In view of eqns. \eqref{eqn-sad} and \eqref{eqn-hat-phi-limsup}, it follows that the
rate of decay of the error probability at sensor~$i$~is:
\begin{eqnarray}
\lim_{k \rightarrow \infty}\,- \frac{1}{k} \log P^e_{i,\mathrm{dis}}(k) =  \lim_{k \rightarrow \infty} \widehat{\phi}(k)
=  \phi^\star.
\end{eqnarray}
The necessary and sufficient condition for asymptotic optimality then follows from eqn. \eqref{eqn-phi-s}.

%
%
%
%
%
\end{proof}

\section{Asymptotic performance of distributed detection: General case}
\label{section-asympt-analysi-running-cons}
This section provides a necessary condition, and a sufficient condition for asymptotic optimality of distributed detection on generic networks and for generic, spatially correlated, Gaussian observations.
When distributed detection is not guaranteed to be optimal, this section finds a lower bound on the exponential decay rate of error probability, in terms of the system parameters. We start by pursuing sufficient conditions for optimality and evaluating the lower bound on the decay rate of the error probability.

\subsection{Sufficient condition for asymptotic optimality}
Recall that $r:=\lambda_2\left( \mathbb{E} \left[ W(k)^2 \right] \right) = \|  \mathbb{E} \left[ W(k)^2 \right] -J   \|$.
 It is well known that the quantity $r$ measures the speed of the information flow, i.e., the speed of the averaging
 across the network, like with standard consensus and gossip algorithms, e.g., \cite{BoydGossip}.
 (The smaller $r$ is, the faster the averaging is.) The next Theorem shows that distributed detection is asymptotically optimal if the network information flow is fast enough, i.e., if $r$ is small enough. The Theorem
  also finds a lower bound on the rate of decay of the error probability, even when the sufficient
  condition for asymptotic optimality does not hold. Recall also ${\bf{C_{\mathrm{tot}}}}$
  in eqn.~\eqref{C-tot-uncorr}.

\begin{theorem}
\label{theorem-sufficient-condition}
Let Assumption~\ref{assumption-W(k)} hold and consider the distributed detection algorithm
 defined by eqns. \eqref{eqn-running-cons-sensor-i} and \eqref{eqn-decision-x-i}. Then, the following holds for the exponential decay rate of the error
 probability at each sensor:
 \begin{eqnarray}
 \label{eqn-rate-bound}
 \liminf_{k \rightarrow \infty}\,-\frac{1}{k} \log P^e_{i,\mathrm{dis}}(k)
 \geq
 \left\{ \begin{array}{ll}
  {\bf{C_{\mathrm{tot}}}}  &\mbox{ if $|\log r| \geq \frac{1}{8} N^2  \left( 1+(1-\frac{1}{N})K\right)\|S^{\eta}\| $} \\
  -\left(\frac{1}{2N^2} \sigma_L^2 \overline{\mu}^2 + \frac{1}{N} m_{L}^{(0)}\overline{\mu}\right) &\mbox{ otherwise,}
       \end{array} \right. \nonumber
 \end{eqnarray}
 where
 \begin{equation}
 \label{eqn-mu-overline}
 \overline{\mu} =
 \left\{ \begin{array}{llll}
  \vspace{1mm}
 \frac{1}{4} \frac{K}{K+1} + \frac{1}{4} \frac{  \sqrt{ K^2+\frac{32|\log r|}{\|S^{\eta}\|}\left( 1+K\right) } }  {K+1}, \\
  \vspace{1mm}
 \mathrm{if\,\,}\frac{1}{8}\|S^{\eta}\| < |\log r| < \frac{1}{8}N^2(1+(1-\frac{1}{N}K))\|S^{\eta}\| ;
 \\
  \vspace{1mm}
  \frac{1}{4}\sqrt{K^2+\frac{32|\log r|}{\|S^{\eta}\|}}-\frac{1}{4}K,\\
   \vspace{1mm}
\mathrm{if\,\,} |\log r| \leq  \frac{1}{8}\|S^{\eta}\|.
       \end{array} \right.
 \end{equation}
 Here $K=\left( 8 \overline{m}\right)/\|S^{\eta}\|$. Moreover, each sensor $i$ is asymptotically optimal, and $ \lim_{k \rightarrow \infty} - \frac{1}{k} \log P^e_{i,\mathrm{dis}}(k) = {\bf{C_{\mathrm{tot}}}},\,\forall i,
 $
 provided that:
 \begin{equation}
 \label{eqn-suf-cond-generic}
 {|\log r|} \geq \frac{1}{8} N^2  \left( 1+(1-\frac{1}{N})K\right)\|S^{\eta}\|.
 \end{equation}

 %
 \end{theorem}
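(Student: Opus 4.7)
The plan is to extend the Chernoff-style analysis used in the proof of Theorem~\ref{theorem-neces-suf-special-case} to generic i.i.d.\ consensus matrices, replacing the discrete partition into fusion/no-fusion epochs by a continuous control on the random products $\widetilde{\Phi}(k,j)$ via moment bounds governed by $r$. As in that earlier proof, the symmetry $x(k)|H_0\eqd -x(k)|H_1$ (using that the law of $W$ is hypothesis-independent and $m_{\eta}^{(1)}=-m_{\eta}^{(0)}$) gives $\alpha_{i,\mathrm{dis}}(k)=\beta_{i,\mathrm{dis}}(k)=P^{e}_{i,\mathrm{dis}}(k)$, so it suffices to upper bound $\alpha_{i,\mathrm{dis}}(k)=\mathbb{P}_0(x_i(k)>0)$. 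I would then apply a Chernoff bound: for every $\mu>0$,
\[
\alpha_{i,\mathrm{dis}}(k)\leq \mathbb{E}_0\bigl[e^{\mu k\,x_i(k)}\bigr]=\mathbb{E}_{\mathcal{W}}\!\Bigl[\mathbb{E}_0\bigl[e^{\mu k\,x_i(k)}\mid \mathcal{W}\bigr]\Bigr],
\]
where $\mathcal{W}=(W(1),\ldots,W(k-1))$. By Assumption~\ref{assumption-W(k)}, $\mathcal{W}$ is independent of the Gaussian observation noise, so conditionally on $\mathcal{W}$, $x_i(k)$ is Gaussian with mean $\mu_i(\mathcal{W})$ and variance $\sigma_i^2(\mathcal{W})$ read from~\eqref{eqn_x(k)-equation} and~\eqref{eqn_mu_sigma}; the inner expectation equals $\exp\!\bigl(\mu k\,\mu_i(\mathcal{W})+\tfrac{\mu^2 k^2}{2}\sigma_i^2(\mathcal{W})\bigr)$.

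Next I would split $\Phi(k,j)=J+\widetilde{\Phi}(k,j)$ and use the identities $J m_{\eta}^{(0)}=(m_L^{(0)}/N)\mathbf{1}$ and $JS^{\eta}J=(\sigma_L^2/N^2)\mathbf{1}\mathbf{1}^{\top}$ (the latter from $\mathbf{1}^{\top}S^{\eta}\mathbf{1}=v^{\top}Sv=\sigma_L^2$) to decompose the exponent into a \emph{centralized} piece, deterministic in $\mathcal{W}$ and equal to $\mu(k-1)m_L^{(0)}/N+\mu^2(k-1)\sigma_L^2/(2N^2)$ up to $O(1)$ boundary terms, and an \emph{error} piece built entirely from the products $\widetilde{\Phi}(k,j)$. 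Dividing by $-k$ and letting $k\to\infty$, the centralized piece contributes $-\mu m_L^{(0)}/N-\mu^2\sigma_L^2/(2N^2)$, whose maximum over $\mu>0$ is attained at $\mu^{\star}=N/2$ and equals $\sigma_L^2/8={\bf{C_{\mathrm{tot}}}}$, consistent with the ``otherwise'' branch of~\eqref{eqn-rate-bound} evaluated at $\overline{\mu}=\mu^{\star}$. The error piece will be handled via two bounds on $\widetilde{\Phi}(k,j)$: the deterministic bound $\|\widetilde{\Phi}(k,j)\|\leq 1$ (each $\widetilde{W}=W-J$ has spectrum in $[-1,1]$ and annihilates $\mathbf{1}$), and the moment bound $\mathbb{E}\,\|\widetilde{\Phi}(k,j)^{\top}u\|^2\leq r^{k-j}\|u\|^2$ for every $u\in\mathbb{R}^N$, which I would prove by induction from $\mathbb{E}[\widetilde{W}^2]\preceq r(I-J)$ together with the i.i.d.\ hypothesis on $\{W(k)\}$.

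The main obstacle is bounding $\mathbb{E}_{\mathcal{W}}[\exp(\text{error piece})]$, because the matrices $\widetilde{\Phi}(k,j)$ for different $j$ share random factors $\widetilde{W}(s)$ and are strongly dependent, so off-the-shelf product MGF inequalities fail. After dominating the error piece by $\mu\overline{m}\sum_j\|\widetilde{\Phi}(k,j)^{\top}e_i\|+\tfrac{\mu^2\|S^{\eta}\|}{2}\sum_j\|\widetilde{\Phi}(k,j)^{\top}e_i\|^2$ (which is the route by which the constant $K=8\overline{m}/\|S^{\eta}\|$ enters), I would take the expectation by peeling off the leftmost factor $\widetilde{W}(k-1)$ and iterating backward: each peeling step introduces a multiplicative blow-up of order $e^{\mu\overline{m}+\mu^2\|S^{\eta}\|/2}$ but is damped by $\mathbb{E}[\widetilde{W}^2]\preceq rI$ on the orthogonal complement of $\mathbf{1}$, where all vectors $\widetilde{\Phi}(k,j)^{\top}e_i$ live. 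Solvability of this recursion, uniformly in $k$, imposes a quadratic balance condition on $\mu$ whose positive root is precisely $\overline{\mu}(r)$ in~\eqref{eqn-mu-overline}; the branch-wise formulas arise from whether the constraint is binding. Maximizing $-\mu m_L^{(0)}/N-\mu^2\sigma_L^2/(2N^2)$ over $\mu\in(0,\overline{\mu}(r)]$ then yields~\eqref{eqn-rate-bound}. In particular, when $|\log r|$ exceeds the threshold in~\eqref{eqn-suf-cond-generic}, one has $\overline{\mu}(r)\geq N/2$, the unconstrained centralized optimum $\mu^{\star}=N/2$ is admissible, the Chernoff exponent equals ${\bf{C_{\mathrm{tot}}}}$, and asymptotic optimality follows.
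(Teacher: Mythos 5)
Your overall architecture coincides with the paper's: the symmetry reduction $\alpha_{i,\mathrm{dis}}(k)=\beta_{i,\mathrm{dis}}(k)$, the Chernoff bound \eqref{eqn-C-bound} with conditioning on the weight matrices, the split $\Phi(k,j)=J+\widetilde{\Phi}(k,j)$ into a centralized quadratic (minimized at $\mu^\star=N/2$ with value $-{\bf{C_{\mathrm{tot}}}}$) plus an error functional of the products, and the threshold obtained from $\mu^\star\le\overline{\mu}$. The genuine gap is in how you integrate the error functional. You claim each backward peeling step incurs a blow-up $e^{\mu\overline{m}+\mu^2\|S^{\eta}\|/2}$ that is ``damped by $\mathbb{E}[\widetilde{W}^2]\preceq r(I-J)$,'' but second-moment contraction does not damp exponential moments per step. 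In the paper's own switching-fusion example, $W=I$ with probability $1-p=r$, so for $v\perp\mathbf{1}$ one has $\widetilde{W}v=v$ with probability $r$ and hence $\mathbb{E}\bigl[e^{a\|\widetilde{W}v\|}\bigr]\geq r\,e^{a\|v\|}$: the factor $r$ materializes only as probability mass on non-contracting steps, never as a per-step multiplicative decay of the MGF, so your recursion cannot be closed as described. The dominant contribution comes from realizations in which $\widetilde{\Phi}(k,\cdot)$ stays order one over the most recent $\theta k$ steps (probability roughly $r^{\theta k}$), and one must optimize the resulting exponent over $\theta\in[0,1]$. That is exactly the bookkeeping the paper performs with the windowed partition $\mathcal{A}_1,\dots,\mathcal{A}_{\mathcal{J}+1}$, the tail bound of Lemma \ref{lemma-phi-epsilon}, Lemmas \ref{lemma-stoch-matr} and \ref{lemma-event-a-l}, and the mode optimization $\sup_{\theta\in[0,1]}\phi(\theta;\mu)$ in \eqref{eqn-sup}--\eqref{eqn-phi-fcn}; your ``quadratic balance condition whose positive root is $\overline{\mu}$'' is precisely the statement that this supremum is attained at $\theta=0$, but in your sketch it is asserted rather than derived, and the per-step mechanism you offer for it fails.

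Second, your dominated error functional drops terms, and with your constants the theorem's formulas do not come out. Expanding the conditional MGF with $\Phi=J+\widetilde{\Phi}$ produces, besides the mean term, the cross terms $\lambda^\top J S^{\eta}\widetilde{\Phi}(k,j)\lambda$ of the covariance quadratic (see \eqref{eqn-delta(k)}); since $S^{\eta}\mathbf{1}=\mathrm{Diag}(v)Sv=-2m_{\eta}^{(0)}$, these are again mean-type terms, but carrying coefficient $\mu^2$. Combined, the correct coefficient on the linear-in-$\widetilde{\Phi}$ part is $|2\mu^2-\mu|\,\overline{m}$, as in \eqref{eqn-delta-over}, not your $\mu\overline{m}$. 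For $\mu\geq\tfrac12$ the balance condition is $2(2\mu^2-\mu)\overline{m}+\tfrac{\mu^2}{2}\|S^{\eta}\|\leq|\log r|$, whose positive root is the first branch of \eqref{eqn-mu-overline} (with its $(K+1)$ denominators), and evaluating it at $\mu^\star=N/2$ gives exactly the threshold \eqref{eqn-suf-cond-generic}. Your coefficient would instead reproduce only the second branch of \eqref{eqn-mu-overline} in every regime and would yield the ``threshold'' $|\log r|\geq\frac18 N^2\bigl(1+\frac{K}{N}\bigr)\|S^{\eta}\|$, strictly weaker than \eqref{eqn-suf-cond-generic} --- a sign the domination discarded mass it cannot afford to discard. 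Both defects are repairable, but repairing the first effectively reconstructs the paper's partition argument.
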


 \begin{figure}[thpb]
      \centering
      \includegraphics[height=2.2
      in,width=3.32in]{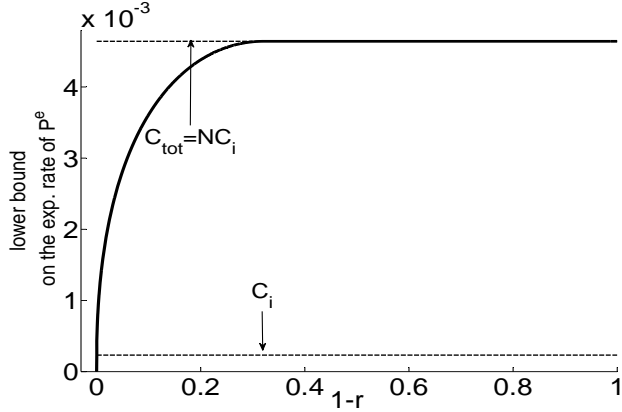}
      \caption{ Lower bound on the exponential decay rate of the error probability (given by eqn. (52))
       for the (generic) model in Section VI. The network has $N=20$ sensors, and
       ${\bf{C_{\mathrm{tot}}}}=0.0047$. The optimal rate (equal to ${\bf{C_{\mathrm{tot}}}}$) is achieved for
       $p \geq 0.29$.}
      \label{Figure_rate_vs_p_generic}
\end{figure}

We proceed by proving Theorem~\ref{theorem-sufficient-condition}. We first set up the proof and state some auxiliary
 Lemmas. We first consider the probability of false alarm, $\alpha_{i,\mathrm{dis}}(k)=\mathbb{P}_0\left( x_i(k)>0  \right)$,
but the same conclusions will hold for $\beta_{i,\mathrm{dis}}(k)=\mathbb{P}_1\left( x_i(k)<0  \right)$ also
 (See the Proof of Theorem~\ref{theorem-neces-suf-special-case}.) We examine the family of Chernoff bounds on the probability of false alarm, parametrized by $\mu > 0$, given by:
 \begin{equation}
 \label{eqn-C-bound}
 \alpha_{i,\mathrm{dis}}(k)\leq \mathbb{E}_0 \left[ e^{k \mu x_i(k)} \right]
 =: \mathcal{C}(k\mu),
 \end{equation}
 where $\mathbb{E}_l\left[ a\right]:=\mathbb{E}\left[ a|H_l\right]$, $l=0,1.$ We then examine the conditions under which the best Chernoff bound falls below the negative of the Chernoff information, in the limit as $k \rightarrow \infty$. More precisely, we
 examine under which conditions the following inequality holds:
 \begin{equation}
 \limsup_{k \rightarrow \infty} \frac{1}{k} \log \mathcal{C}(k \mu) \leq -{\bf{C}_{\mathrm{tot}}}.
 \end{equation}
 In subsequent analysis, we will use the following Lemmas, Lemma~\ref{lemma-stoch-matr} and Lemma~\ref{lemma-phi-epsilon}; proof of
 Lemma~\ref{lemma-stoch-matr} is trivial, while proof of Lemma~\ref{lemma-phi-epsilon} is in the Appendix.

\begin{lemma}
\label{lemma-stoch-matr}
Let $V$ be a $N \times N$ stochastic matrix and consider the matrix $\widetilde{V}=V-J$. Then, for all $i=1,...,N$, the following inequalities hold:
\begin{eqnarray}
\sum_{l=1}^N |\widetilde{V}_{il}| &\leq& 2\frac{N-1}{N}<2\\
\sum_{l=1}^N |\widetilde{V}_{il}|^2 &\leq&  \frac{N-1}{N} <1.
\end{eqnarray}
\end{lemma}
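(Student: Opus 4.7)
The plan is to work directly with the entries $\widetilde{V}_{il} = V_{il} - J_{il} = V_{il} - 1/N$, exploiting only that $V$ is row-stochastic with nonnegative entries. Fix a row index $i$ throughout.

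For the $\ell_1$ bound, I would first split the set of column indices into $S_+ = \{l : V_{il} \geq 1/N\}$ and $S_- = \{l : V_{il} < 1/N\}$. Since both $V$ and $J$ are row-stochastic, the signed sum vanishes,
\begin{equation*}
\sum_{l=1}^{N} (V_{il} - 1/N) = 1 - 1 = 0,
\end{equation*}
which means the positive and negative parts have equal total mass, so $\sum_{l} |\widetilde{V}_{il}| = 2 \sum_{l \in S_+}(V_{il} - 1/N)$. Now bound
\begin{equation*}
\sum_{l \in S_+}(V_{il} - 1/N) = \Bigl(\sum_{l \in S_+} V_{il}\Bigr) - |S_+|/N \leq 1 - |S_+|/N,
\end{equation*}
using nonnegativity and $\sum_l V_{il} = 1$. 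The set $S_+$ is nonempty (otherwise all entries would be below $1/N$, contradicting the row-sum), so $|S_+| \geq 1$, giving $\sum_{l \in S_+}(V_{il} - 1/N) \leq (N-1)/N$, and doubling yields the first inequality.

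For the $\ell_2^2$ bound, I would expand the square and use the row-sum identity:
\begin{equation*}
\sum_{l=1}^{N}(V_{il} - 1/N)^2 = \sum_{l=1}^N V_{il}^2 - \frac{2}{N}\sum_{l=1}^N V_{il} + N \cdot \frac{1}{N^2} = \sum_{l=1}^N V_{il}^2 - \frac{1}{N}.
\end{equation*}
Since each $V_{il} \in [0,1]$, one has $V_{il}^2 \leq V_{il}$, hence $\sum_l V_{il}^2 \leq \sum_l V_{il} = 1$. Combining gives $\sum_l |\widetilde{V}_{il}|^2 \leq 1 - 1/N = (N-1)/N$, and the strict inequality $< 1$ is immediate.

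\textbf{Expected obstacle.} There is no genuine obstacle: both inequalities are elementary consequences of the row-stochastic property and the comparison $V_{il}^2 \leq V_{il}$ on $[0,1]$. The only point requiring a little care is the justification that $|S_+| \geq 1$ in the first part (needed to obtain the strict bound $< 2$), which follows immediately from $\sum_l V_{il} = 1$ forcing at least one entry to be at least the average $1/N$. Everything else is bookkeeping.
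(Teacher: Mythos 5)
Your proof is correct: the sign-splitting argument with $\sum_l \widetilde{V}_{il}=0$ and $|S_+|\geq 1$ gives the sharp $\ell_1$ bound $2(N-1)/N$ (attained when row $i$ of $V$ is a standard basis vector), and the expansion $\sum_l (V_{il}-1/N)^2 = \sum_l V_{il}^2 - 1/N \leq 1 - 1/N$ via $V_{il}^2 \leq V_{il}$ on $[0,1]$ is exactly right. The paper itself omits the proof, stating only that it ``is trivial,'' and your argument is the standard elementary one the authors evidently had in mind, so there is nothing to reconcile.
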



\begin{lemma}
\label{lemma-phi-epsilon}
Let Assumption~\ref{assumption-W(k)} hold. Then, the following inequality holds:
\begin{equation}
\mathbb{P} \left( \| \widetilde{\Phi}(k,j)  \|  > \epsilon \right) \leq \frac{N^4}{\epsilon^2} r^{k-j}.
\end{equation}
\end{lemma}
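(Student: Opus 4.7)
The strategy is to apply Markov's inequality to $\|\widetilde{\Phi}(k,j)\|^2$, dominate the spectral norm by the Frobenius norm, and then evaluate the resulting expected trace by peeling off the $\widetilde{W}(s)$'s one by one, using their independence and a single matrix inequality $\mathbb{E}[\widetilde{W}^2]\preceq rI$.

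First I would record the algebraic consequences of Assumption~\ref{assumption-W(k)}. Since $W$ is symmetric and stochastic, $W J = J W = J$ almost surely, hence $\widetilde{W} J = J \widetilde{W} = 0$ almost surely and $\widetilde{W}^2 = W^2 - J$. Thus $\mathbb{E}[\widetilde{W}^2] = \mathbb{E}[W^2] - J$. This matrix is positive semidefinite (as an average of the PSD matrices $\widetilde{W}^2 = \widetilde{W}^\top \widetilde{W}$), annihilates the all-ones vector, and has spectral norm equal to $r$ by definition, so $\mathbb{E}[\widetilde{W}^2] \preceq r\, I$.

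Next, combining Markov's inequality with $\|M\|^2 \le \|M\|_F^2 = \tr(M^\top M)$ gives
\[
\mathbb{P}\bigl(\|\widetilde{\Phi}(k,j)\| > \epsilon\bigr) \;\le\; \frac{1}{\epsilon^2}\,\mathbb{E}\bigl[\tr\bigl(\widetilde{\Phi}(k,j)^\top \widetilde{\Phi}(k,j)\bigr)\bigr].
\]
Using the symmetry of each $\widetilde{W}(s)$, I would rewrite
\[
\widetilde{\Phi}(k,j)^\top \widetilde{\Phi}(k,j) \;=\; \widetilde{\Phi}(k-1,j)^\top\,\widetilde{W}(k-1)^2\,\widetilde{\Phi}(k-1,j).
\]
Since $\widetilde{W}(k-1)$ is independent of $\widetilde{\Phi}(k-1,j)$, conditioning on the latter and inserting $\mathbb{E}[\widetilde{W}(k-1)^2]\preceq rI$ inside the trace (which is monotone on the PSD cone) yields
\[
\mathbb{E}\bigl[\tr\bigl(\widetilde{\Phi}(k,j)^\top \widetilde{\Phi}(k,j)\bigr)\bigr] \;\le\; r\, \mathbb{E}\bigl[\tr\bigl(\widetilde{\Phi}(k-1,j)^\top \widetilde{\Phi}(k-1,j)\bigr)\bigr].
\]
Iterating this recursion $k-j$ times with the base case $\widetilde{\Phi}(j,j)=I$ (so that $\tr(I)=N$) gives $\mathbb{E}[\|\widetilde{\Phi}(k,j)\|_F^2]\le N\,r^{k-j}$, and hence $\mathbb{P}(\|\widetilde{\Phi}(k,j)\|>\epsilon) \le N r^{k-j}/\epsilon^2$. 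Since $N \le N^4$, this already implies the stated bound (with room to spare).

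The main obstacle is the bookkeeping in the peeling step: one must condition in the correct order to exploit the independence of the $W(s)$'s, and convert the operator inequality $\mathbb{E}[\widetilde{W}^2]\preceq rI$ into a scalar recursion via monotonicity of the trace on the PSD cone. Everything else is standard manipulation.
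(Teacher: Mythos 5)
Your proof is correct, and it in fact establishes a sharper bound, $\mathbb{P}(\|\widetilde{\Phi}(k,j)\|>\epsilon)\le \frac{N}{\epsilon^2}r^{k-j}$, than the lemma's $\frac{N^4}{\epsilon^2}r^{k-j}$. The core probabilistic ingredient is identical to the paper's: Markov's inequality on a second moment, followed by peeling off one $\widetilde{W}(s)$ at a time by conditioning on the earlier matrices and using the i.i.d.\ assumption together with $\|\mathbb{E}[\widetilde{W}^2]\|=r$ (your identity $\widetilde{W}^2=W^2-J$, hence $\mathbb{E}[\widetilde{W}^2]=\mathbb{E}[W^2]-J\preceq rI$, is exactly what underlies the paper's use of $r$ as well, and your trace-monotonicity step $\tr(A^\top M A)=\tr(MAA^\top)\le r\,\tr(AA^\top)$ for $M\preceq rI$ is sound). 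Where you genuinely diverge is in how the spectral-norm tail is reduced to that second moment. The paper works column by column: it applies Markov to each column $z^{(i)}(k,j)=\widetilde{\Phi}(k,j)e_i$, obtaining $\mathbb{E}[\|z^{(i)}\|^2]\le r^{k-j}$ by the same recursion (your trace recursion is precisely the sum over $i$ of the paper's column recursions), and then passes from column norms to the matrix spectral norm via the chain $\|z\|_1\le N\|z\|$, $\|B\|\le\sqrt{N}\|B\|_1$, and a union bound over the $N$ columns --- this cascade of norm conversions is exactly where the factor $N^4$ originates. Your single domination $\|M\|^2\le\|M\|_F^2=\tr(M^\top M)$ bypasses the union bound and the $\ell_1/\ell_2$ conversions entirely, which is both cleaner and quantitatively better; since the lemma is only used through $\frac{N^4}{\epsilon^2}r^{(j-1)B}$, a polynomial-in-$N$ prefactor that is killed in the $\frac{1}{k}\log(\cdot)$ limit, the improvement does not change any downstream conclusion, but it is a legitimate strengthening. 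One cosmetic point: $\widetilde{\Phi}(j,j)$ is not defined in the paper (eqn.~\eqref{eqn-tilde-phi} requires $k>j$), so it is tidier to stop the recursion at $\widetilde{\Phi}(j+1,j)=\widetilde{W}(j)$ and use $\mathbb{E}[\tr(\widetilde{W}(j)^2)]=\tr(\mathbb{E}[\widetilde{W}^2])\le rN$ as the base case, rather than invoking the empty-product convention.
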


When proving Theorem~\ref{theorem-sufficient-condition}, we will
partition the time interval from the first to $(k-1)$-th time step in windows of width $B$,
 for some integer $B\geq 1$. That is, we consider
 the subsets of consecutive time steps $\{k-B,k-B+1,...,k-1\}$, $\{k-2B,...,k-B-1\}$,
 $...$, $\{k-(\mathcal{J}-1)B,...,k-\mathcal{J}B-1\}$, $\{ 1,2,...,k-\mathcal{J} B-1\}$, where $\mathcal{J}$ is the integer part of $\frac{k-1}{B}$. (Note that the total number of these subsets is $\mathcal{J}+1$; each of these subsets contains $B$ time steps,
 except $\{ 1,2,...,k-\mathcal{J} B-1\}$ that in general has the number of time steps less or equal to $B$.) We then define the events $\mathcal{A}_j$,
  $j=1,...,\mathcal{J}+1$, as follows:
\begin{eqnarray}
\mathcal{A}_1 &=& \left\{ \|\widetilde{\Phi}(k,k-B)\| \leq \epsilon \right\}\\
\mathcal{A}_2 &=& \left\{ \|\widetilde{\Phi}(k,k-B)\| > \epsilon \right\} \cap
\left\{  \|\widetilde{\Phi}(k,k-2B)\| \leq \epsilon  \right\}\nonumber\\
\vdots\nonumber\\
\mathcal{A}_{j+1} &=& \left\{ \|\widetilde{\Phi}(k,k-j B)\| > \epsilon \right\} \cap
\left\{  \|\widetilde{\Phi}(k,k-(j+1)B)\| \leq \epsilon  \right\} \nonumber\\
\vdots\nonumber\\
\mathcal{A}_{\mathcal{J}} &=& \left\{ \|\widetilde{\Phi}(k,k-(\mathcal{J}-1) B)\| > \epsilon \right\} \cap
\left\{  \|\widetilde{\Phi}(k,k-\mathcal{J} B)\| \leq \epsilon  \right\} \nonumber \\
\mathcal{A}_{\mathcal{J}+1} &=& \left\{ \|\widetilde{\Phi}(k,k-\mathcal{J} B)\| > \epsilon \right\} .\nonumber
\end{eqnarray}
It is easy to see that the events $\mathcal{A}_j$, $j=1,...,\mathcal{J}+1$, constitute a
 finite partition of the probability space $\Omega$ (i.e.,
  each $\mathcal{A}_i$ and $\mathcal{A}_j$, $i \neq j$,
   are disjoint, and the union of $\mathcal{A}_j$'s, $j=1,...,\mathcal{J}+1$,
   is $\Omega$.)

By Lemma \ref{lemma-phi-epsilon}, the probability of the event $\mathcal{A}_j$, $j=1,...,\mathcal{J}+1$, is bounded from above as follows:
\begin{equation}
\label{eqn-P_A_j-upper-bound}
\mathbb{P} \left( \mathcal{A}_j\right) \leq \frac{N^4}{\epsilon^2 } r^{(j-1)B}.
\end{equation}

We will next explain the idea behind using the partition $\mathcal{A}_j$, $j=1,...,\mathcal{J}+1$.
First, we state Lemma \ref{lemma-event-a-l} that helps to understand this idea.
\begin{lemma}
\label{lemma-event-a-l}
Consider the random matrices $\widetilde{\Phi}(k,j)$ given by eqn. \eqref{eqn-tilde-phi}, and fix two
time steps, $k$ and $j$, $k>j$. Suppose that the event $\mathcal{A}_j$ occurred.
 Then, for any $s \leq k-jB$, and for $i=1,...,N$:
\begin{eqnarray}
\label{eqn-prva}
\sum_{l=1}^N |[\widetilde{\Phi}(k,s)]_{il}| &\leq& N \sqrt{N} \epsilon\\
\label{eqn-druga}
\sum_{l=1}^N |[\widetilde{\Phi}(k,s)]_{il}|^2 &\leq& N \epsilon^2.
\end{eqnarray}
\end{lemma}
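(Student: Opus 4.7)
The plan is to exploit a semigroup-style factorization of $\widetilde{\Phi}$ at the cut-point $k-jB$, so that the bounds from the event $\mathcal{A}_j$ handle the left factor while Lemma~\ref{lemma-stoch-matr} handles the right factor.

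First I would note that since each $W(l)$ is symmetric and stochastic, $W(l)J = JW(l) = J$ and $J^2=J$; a routine induction on the length of the product then gives the two identities
\begin{equation*}
\widetilde{\Phi}(k,s) = \Phi(k,s) - J = \widetilde{\Phi}(k,k-jB)\,\widetilde{\Phi}(k-jB,s), \quad s < k-jB.
\end{equation*}
The first identity is already noted in the excerpt; the second follows by splitting the product $\widetilde{W}(k-1)\cdots\widetilde{W}(s)$ at the index $k-jB$. In particular, $\Phi(k-jB,s)$ is a product of stochastic matrices, hence itself stochastic, so Lemma~\ref{lemma-stoch-matr} applies to $\widetilde{\Phi}(k-jB,s)=\Phi(k-jB,s)-J$, giving for every row index $m$ the bounds $\sum_l |[\widetilde{\Phi}(k-jB,s)]_{ml}|\leq 2$ and $\sum_l |[\widetilde{\Phi}(k-jB,s)]_{ml}|^2\leq 1$.

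Next, under $\mathcal{A}_j$, we have $\|\widetilde{\Phi}(k,k-jB)\|\leq \epsilon$, and since the spectral norm of a matrix dominates the Euclidean norm of any of its rows, $\sum_m |[\widetilde{\Phi}(k,k-jB)]_{im}|^2 \leq \epsilon^2$ for each $i$. For $s<k-jB$ I would then apply Cauchy--Schwarz entrywise:
\begin{equation*}
\bigl|[\widetilde{\Phi}(k,s)]_{il}\bigr|^2
= \Bigl|\sum_m [\widetilde{\Phi}(k,k-jB)]_{im}\,[\widetilde{\Phi}(k-jB,s)]_{ml}\Bigr|^2
\leq \Bigl(\sum_m |[\widetilde{\Phi}(k,k-jB)]_{im}|^2\Bigr)\Bigl(\sum_m |[\widetilde{\Phi}(k-jB,s)]_{ml}|^2\Bigr).
\end{equation*}
Summing over $l$, interchanging the order of summation, and using the two row-wise bounds gives $\sum_l |[\widetilde{\Phi}(k,s)]_{il}|^2 \leq \epsilon^2\cdot N = N\epsilon^2$, which is \eqref{eqn-druga}. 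The $\ell_1$ row bound \eqref{eqn-prva} then follows by one more Cauchy--Schwarz over $l$: $\sum_l |[\widetilde{\Phi}(k,s)]_{il}| \leq \sqrt{N}\,(N\epsilon^2)^{1/2} = N\epsilon \leq N\sqrt{N}\epsilon$. The boundary case $s=k-jB$ is even simpler: the quantity to bound is directly a row of $\widetilde{\Phi}(k,k-jB)$, so \eqref{eqn-druga} follows from $\|\widetilde{\Phi}(k,k-jB)\|\leq \epsilon$ via row-norm dominance, and \eqref{eqn-prva} follows by Cauchy--Schwarz in $l$.

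There is no real obstacle here; the only thing that needs care is the index bookkeeping in the definition $\widetilde{\Phi}(k,j)=\widetilde{W}(k-1)\cdots\widetilde{W}(j)$, which determines the side on which the factorization splits and thereby which factor inherits the spectral-norm control from $\mathcal{A}_j$ versus the row-sum control from Lemma~\ref{lemma-stoch-matr}.
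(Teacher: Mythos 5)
Your proof is correct, and while it opens exactly as the paper does --- factoring at the cut index, $\widetilde{\Phi}(k,s)=\widetilde{\Phi}(k,k-jB)\,\widetilde{\Phi}(k-jB,s)$, with the event $\mathcal{A}_j$ supplying $\|\widetilde{\Phi}(k,k-jB)\|\leq\epsilon$ --- it then takes a different route to the row bounds. The paper bounds the tail factor in spectral norm, $\|\widetilde{\Phi}(k-jB,s)\|\leq 1$ (valid because each $\widetilde{W}(t)=W(t)-J$ has spectral norm at most one, $W(t)$ being symmetric stochastic), concludes by submultiplicativity that $\|\widetilde{\Phi}(k,s)\|\leq\epsilon$ uniformly for $s\leq k-jB$, and then converts this single spectral bound into \eqref{eqn-prva} and \eqref{eqn-druga} via the norm equivalences $\max_{i,l}\left|[\,\cdot\,]_{il}\right|\leq\sqrt{N}\,\|\cdot\|$ and $\|\cdot\|_F^2\leq N\|\cdot\|^2$; Lemma~\ref{lemma-stoch-matr} plays no role in the paper's proof of this particular lemma. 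You instead keep the factorization at the entry level: the $i$-th row of the left factor has Euclidean norm at most $\epsilon$ (row norm dominated by spectral norm), the rows of the tail factor are controlled by Lemma~\ref{lemma-stoch-matr} --- legitimately, since $\Phi(k-jB,s)$ is a product of stochastic matrices, hence stochastic, and $\widetilde{\Phi}(k-jB,s)=\Phi(k-jB,s)-J$ --- and Cauchy--Schwarz with an interchange of summation does the rest. Both arguments are sound; yours buys two small things: it never needs the spectral bound $\|\widetilde{W}(t)\|\leq 1$ on the tail product (so the symmetry of $W$ is not invoked at that step), and it yields the slightly sharper constant $N\epsilon$ in \eqref{eqn-prva} in place of the paper's $N\sqrt{N}\,\epsilon$. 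Your closing remark about index bookkeeping is also vindicated: given the definition \eqref{eqn-tilde-phi}, the correct split is $\widetilde{\Phi}(k,k-jB)\,\widetilde{\Phi}(k-jB,s)$ as you write, whereas the paper's proof displays $\widetilde{\Phi}(k-jB-1,s)$, which drops the factor $\widetilde{W}(k-jB-1)$ --- a harmless typo in the paper that your version silently corrects.
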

\begin{proof}
See the Appendix.
\end{proof}

Consider now the partition $\mathcal{A}_j$, $j=1,...,\mathcal{J}+1$. Given that $\mathcal{A}_j$
 occurred, the matrices $\Phi(k,s)$, for $s \leq k-jB$, are
 ``$\epsilon$-close'' to $J$ (in view of Lemma \ref{lemma-event-a-l}.) For $s>k-jB$,
  $\Phi(k,s)$ is not ``$\epsilon$-close'' to $J$, and thus
   we pass to the ``worst case'' as given by Lemma \ref{lemma-stoch-matr}. (We may think of
   this as setting $\Phi(k,s)\approx I$.) Hence, given that $\mathcal{A}_j$
    occurred, $x(k)$, in a crude approximation, behaves as:
    \[
    x(k) \sim \frac{1}{k} \sum_{l=1}^{k-jB} J \eta(l) + \frac{1}{k} \sum_{l=k-jB+1}^k I \eta(l).
    \]
Then, by conditioning $x(k)$ on $\mathcal{A}_j$, $j=1,...,\mathcal{J}+1$,
and by using the total probability (as with Theorem \ref{theorem-neces-suf-special-case},)
 we express $P^e_{i,\mathrm{dis}}(k)$ in terms of the $j=1,...,\mathcal{J}+1$ ``modes of decay''
  that correspond to $\mathcal{A}_j$, $j=1,...,\mathcal{J}+1$. Finally, we find the slowest
  among the $\mathcal{J}+1$ modes, as with Theorem \ref{theorem-neces-suf-special-case}.
   The difference compared to Theorem \ref{theorem-neces-suf-special-case}, however,
    is that here we work with the Chernoff bounds on $\alpha_{i,\mathrm{dis}}(k)$ and $P^e_{i,\mathrm{dis}}(k)$, rather than
     directly with $\alpha_{i,\mathrm{dis}}(k)$ and $P^e_{i,\mathrm{dis}}(k)$.
     After this qualitative description and after explaining the main idea behind the calculations, we proceed with
     the exact calculations, some of which are moved to the Appendix.

\begin{proof}[Proof of Theorem~\ref{theorem-sufficient-condition}] Consider the Chernoff bound on $\alpha_{i,\mathrm{dis}}(k)$ given by eqn.~\eqref{eqn-C-bound}, and denote by $\overline{m}_0:=\max_{i=1,...,N}|[m_{\eta}^{(0)}]_i|$.
 We first express the Chernoff upper bound on $\limsup_{k \rightarrow \infty} \frac{1}{k} \log P^e_{i,\mathrm{dis}}(k) $
 in terms of the ``modes of decay,'' similarly as in the Proof of Theorem~\ref{theorem-neces-suf-special-case}. Namely,
  it can be shown (details are in the Appendix) that the following inequality holds:
\begin{eqnarray}
\label{eqn-append}
\limsup_{k \rightarrow \infty} \frac{1}{k} \log \alpha_{i,\mathrm{dis}}(k) &=&
\limsup_{k \rightarrow \infty} \frac{1}{k} \log P^e_{i,\mathrm{dis}}(k)  \\
&\leq &
\limsup_{k \rightarrow \infty} \frac{1}{k} \log \overline{\mathcal{C}}(k\mu) \overline{\delta}(k \mu), \,\,\forall \mu>0,\,\forall \epsilon \in (0,1),\,\,\forall B=1,2,... \nonumber
\end{eqnarray}
where
\begin{eqnarray}
\label{eqn-C-cal-over}
\overline{\mathcal{C}}(k\mu) &=& \mathrm{exp} \left((k-1)(\frac{1}{2N^2} \sigma_L^2 \mu^2 + \frac{1}{N} m_L^{(0)}\,\mu) \right) \\
\label{eqn-delta-over}
\overline{\delta}(k\mu) &=&  \sum_{j=0}^{\mathcal{J}} \mathrm{exp} \left( |2 \mu^2-\mu|\,
\overline{m} \, \left\{2(j+1)B + (k-(j+1)B) N\sqrt{N}\epsilon \right\} \right)\\
&\,&\mathrm{exp} \left( \frac{\mu^2}{2} \,\|S^{\eta}\| \,
\left\{  (j+1)B + (k-(j+1)B)N\epsilon^2\right\}\right)  \left(  \frac{N^4\,r^{jB}}{\epsilon^2}\right). \nonumber
\end{eqnarray}
Notice that the summands in eqn. \eqref{eqn-delta-over} are the ``modes of decay'' that correspond to the events $\mathcal{A}_{j+1}$, $j=1,...,\mathcal{J}+1$. Next, after bounding from above by
``the slowest mode,'' we get:
\begin{eqnarray}
\label{eqn-max}
\overline{\delta}(k \mu) &\leq&  \left( \mathcal{J}+1 \right) \,\max_{j=0,...,\mathcal{J}}\, \mathrm{exp} \left( |2 \mu^2-\mu|\,
\overline{m} \, \left\{2(j+1)B + (k-(j+1)B) N\sqrt{N} \epsilon \right\} \right)\\
&\,&\mathrm{exp} \left( \frac{\mu^2}{2} \,\|S^{\eta}\| \,
\left\{  (j+1)B + (k-(j+1)B)N\epsilon^2\right\}\right)  \left(  \frac{N^4\,r^{jB}}{\epsilon^2}\right).
\end{eqnarray}

Introduce the variable $\theta=\theta(j)=\frac{jB}{k}$. We now replace
the maximum over the discrete set $j=0,...,\mathcal{J}$, in eqn. \eqref{eqn-max} by
the supremum over $j \in [0,\mathcal{J}]$, i.e., $\theta \in [0,1]$; we get:
\begin{eqnarray}
\delta(k\mu) &\leq& \left( \mathcal{J}+1 \right) \sup_{\theta \in [0,1]}
\mathrm{exp} \left( |2 \mu^2-\mu|\, \overline{m} \, \left\{ 2 k \theta (1-N\sqrt{N}\epsilon)
+ B(2-N\sqrt{N}\epsilon)+kN\sqrt{N}\epsilon\right\} \right)\\
&\,&\mathrm{exp} \left( \frac{\mu^2}{2} \,\|S^{\eta}\| \,
\left\{  k \theta (1-N\epsilon^2)+B(1-N\epsilon^2)+k\epsilon^2\right\}\right)  \left(  \frac{N^4r^{jB}}{\epsilon^2}\right). \nonumber
\end{eqnarray}

Taking the limsup as $k \rightarrow \infty$, and then letting $\epsilon \rightarrow 0$, we obtain the following inequality:
 \begin{eqnarray}
 \label{eqn-sup}
 \limsup_{k\rightarrow \infty} \frac{1}{k} \log \mathcal{C}(k \mu) &\leq&
 \sup_{\theta \in [0,1]} \phi(\theta;\,\mu), \,\,\forall \mu>0\\
 \label{eqn-phi-fcn}
 \phi(\theta;\,\mu) &=&
 \frac{1}{2N^2} \sigma_L^2 \mu^2 + \frac{1}{N} m_L^{(0)}\,\mu +  2 |2\mu^2-\mu| \overline{m} \theta + \frac{\mu^2}{2} \|S^{\eta}\|  \theta + \theta \log r.
 \end{eqnarray}

Eqn. \eqref{eqn-sup} gives a family of upper bounds, indexed by $\mu>0$, on the
quantity $\limsup_{k\rightarrow \infty} \frac{1}{k} \log \alpha_{i,\mathrm{dis}}(k)$; we seek
 the most aggressive bound, i.e., we take the infimum over $\mu>0$.

 We first discuss the values of the parameter $r$, for which there exists
 a range $\mu \in [\frac{1}{2}, \overline{\mu}]$ (where $\overline{\mu}$ is given by eqn. \eqref{eqn-mu-overline},) where
  the supremum in eqn.~\eqref{eqn-sup} is attained at $\theta=0$. It can be shown that this range is nonempty if and only if:
\begin{equation}
\label{eqn-cond-sada}
|\log r| > \frac{1}{8} \|S^{\eta}\|.
\end{equation}
In view of eqns. \eqref{eqn-sup}, \eqref{eqn-phi-fcn}, and \eqref{eqn-cond-sada}, we have the following inequality:
\begin{eqnarray}
\limsup_{k\rightarrow \infty} \frac{1}{k} \log \mathcal{C}^\prime(k \mu)
\leq \min_{\mu \in [\frac{1}{2},\overline{\mu}]} \left\{\frac{1}{2N^2} \sigma_L^2 \mu^2 + \frac{1}{N} m_L^{(0)}\,\mu\right\}.
\end{eqnarray}
The global minimum (on $\mu \in {\mathbb R}$) of the function
 $\mu \mapsto \frac{1}{2N^2} \sigma_L^2 \mu^2 + \frac{1}{N} m_L^{(0)}\,\mu$
  is attained at $\mu^{\star}=N/2$; and it can be shown that it equals the negative of the Chernoff information
   $-{\bf{C_{\mathrm{tot}}}}=-\frac{1}{8} (m_1-m_0)^\top S^{-1} (m_1-m_0).$
   Thus, a sufficient condition for
   $\limsup_{k \rightarrow \infty}\frac{1}{k} \log \alpha_{i,\mathrm{dis}}(k) \leq \limsup_{k\rightarrow \infty} \frac{1}{k} \log \mathcal{C}^\prime(k \mu)
   \leq -{\bf{C_{\mathrm{tot}}}}$ is that $\mu^{\star}=\frac{N}{2} \in [\frac{1}{2}, \overline{\mu}]$. By straightforward
   algebra, it can be shown that the latter condition translates into the following condition:
    \begin{eqnarray}
    \label{eqn_suf_cond}
    {|\log r|} \geq \frac{1}{8} N^2  \left( 1+(1-\frac{1}{N})K\right)\|S^{\eta}\| .
    \end{eqnarray}
The analysis of the upper bound on $\alpha_{i,\mathrm{dis}}(k)$
  remains true for the upper bound on $\beta_{i,\mathrm{dis}}(k)$; hence,
   we conclude that, under condition~\eqref{eqn_suf_cond}, we have:
   $\limsup_{k \rightarrow \infty}\frac{1}{k} \log \beta_{i,\mathrm{dis}}(k) \leq
    -{\bf{C_{\mathrm{tot}}}}$. Thus, under the condition~\eqref{eqn_suf_cond}, the following holds:
      \begin{eqnarray}
      \label{eqn-ovaovde}
      \limsup_{k \rightarrow \infty} \frac{1}{k} \log P^e_{i,\mathrm{dis}}(k) &=&\max
      \left( \limsup_{k \rightarrow \infty}\frac{1}{k} \log \alpha_{i,\mathrm{dis}}(k),\,\limsup_{k \rightarrow \infty}\frac{1}{k} \log \beta_{i,\mathrm{dis}}(k)  \right) \leq -{\bf{C_{\mathrm{tot}}}}.
      \end{eqnarray}
      On the other hand, by the Chernoff Lemma (Lemma \ref{lemma-Chernoff-gauss},) we also know that
      \begin{equation}
      \label{eqn-ovaovde2}
      \liminf_{k \rightarrow \infty} \frac{1}{k} \log P^e_{i,\mathrm{dis}}(k) \geq -{\bf{C_{\mathrm{tot}}}}.
      \end{equation}
      By eqns. \eqref{eqn-ovaovde} and \eqref{eqn-ovaovde2}, we conclude that
       $\lim_{k \rightarrow \infty} \frac{1}{k} \log P^e_{i,\mathrm{dis}}(k) = - {\bf{C_{\mathrm{tot}}}}$,
       for the values of $r$ that satisfy the condition \eqref{eqn_suf_cond}. Hence, condition \eqref{eqn_suf_cond} is a sufficient condition for asymptotic optimality.

       Consider now the values of $r$ such that:
       \begin{equation}
       \label{eqn-tata}
      \frac{1}{8} \|S^{\eta}\|<{|\log r|} < \frac{1}{8} N^2  \left( 1+(1-\frac{1}{N})K\right)\|S^{\eta}\|.
       \end{equation}
      In this range, $\frac{1}{2} < \overline{\mu}<\mu^\star=N/2$, and the minimum of the function $\mu \mapsto \frac{1}{2N^2} \sigma_L^2 \mu^2 + \frac{1}{N} m_L^{(0)}\,\mu$ on $\mu \in [\frac{1}{2}, \overline{\mu}]$ is attained at $\mu=\overline{\mu}$. Hence,
      for the values of $r$ in the range \eqref{eqn-tata}, we have the following bound:
      \begin{eqnarray}
      \limsup_{k \rightarrow \infty} \frac{1}{k} \log P^e_{i,\mathrm{dis}}(k) &=&\max
      \left( \limsup_{k \rightarrow \infty}\frac{1}{k} \log \alpha_{i,\mathrm{dis}}(k),\,\limsup_{k \rightarrow \infty}\frac{1}{k} \log \beta_{i,\mathrm{dis}}(k)  \right) \\
      &\leq& \frac{1}{2N^2} \sigma_L^2 \overline{\mu}^2 + \frac{1}{N} m_L^{(0)}\,\overline{\mu}.
      \end{eqnarray}

Finally, consider the values of $r$ such that the condition in \eqref{eqn-cond-sada} does not hold, i.e.,
$\frac{1}{8} \|S^{\eta}\| \geq {|\log r|}$. We omit further details, but it can be shown that, in this case,
the supremum in eqn.~\eqref{eqn-sup} is attained at $\theta=0$ only for
the values of $\mu$ in a subset of $[0, \frac{1}{2}]$; and it can be shown that $\limsup_{k \rightarrow \infty} \frac{1}{k} \log P^e_{i,\mathrm{dis}}(k)$ can be bounded as given by eqn. \eqref{eqn-rate-bound}.
\end{proof}

\subsection{Necessary condition for asymptotic optimality}

We proceed with necessary conditions for asymptotic optimality, for the case when the sensors' observations
are spatially uncorrelated, i.e., the noise covariance $S=\mathrm{Diag} \left(\sigma_1^2,...,\sigma_N^2 \right)$.
Denote by $E_i(k)$ the event
 that, at time $k$,
  sensor $i$ is connected to at least
  one of the remaining sensors in
   the network; that is,
   \begin{equation}
   E_i(k):= \left\{ \max_{j=1,...,N,\,j\neq i} W_{ij}(k) >0 \right\}.
   \end{equation}
   Further, denote by $P_i(k)=P_i=\mathbb{P}(E_i(k))$
     We have the following result.

\begin{theorem}[Necessary condition for asymptotic optimality]
\label{theorem-necess-cond-general}
Consider the distributed detection algorithm in eqns. \eqref{eqn-running-cons-sensor-i} and \eqref{eqn-decision-x-i} with
 spatially uncorrelated sensors' observations, and let Assumption~\ref{assumption-W(k)} hold.
 Then, a necessary condition for the asymptotic optimality of distributed detection at sensor $i$ is:
  \begin{equation}
  |\log(1-P_i)| > {\bf{C_{\mathrm{tot}}}} -  {\bf{C}_i}.
  \end{equation}
  \end{theorem}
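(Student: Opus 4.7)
The plan is to mirror the dominant--event analysis of Theorem~\ref{theorem-neces-suf-special-case} by isolating a realization of the random network on which sensor $i$ is forced to behave as a standalone detector. Specifically, I would introduce the event
\begin{equation*}
F_i(k) := \bigcap_{s=1}^{k-1} E_i(s)^{c},
\end{equation*}
that sensor $i$ is disconnected from every other sensor at every time step $s=1,\ldots,k-1$. Assumption~\ref{assumption-W(k)}(1) (the matrices $W(s)$ are i.i.d.\ in $s$) immediately gives $\mathbb{P}(F_i(k))=(1-P_i)^{k-1}$. On $F_i(k)$, all off-diagonal entries of row $i$ of $W(s)$ vanish, and row-stochasticity (Assumption~\ref{assumption-W(k)}(2)) then forces $[W(s)]_{i,\cdot}=e_i^{\top}$ for every $s=1,\ldots,k-1$. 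A one-line induction on the product in~\eqref{eqn-def-Phi} propagates this property through to $[\Phi(k,j)]_{i,\cdot}=e_i^{\top}$ for every $j\leq k-1$, so the representation~\eqref{eqn_x(k)-equation} specializes at sensor $i$ to
\begin{equation*}
x_i(k) \;=\; \frac{1}{k}\sum_{j=1}^{k}\eta_i(j) \qquad \text{on } F_i(k),
\end{equation*}
which is exactly the sufficient statistic of sensor $i$ working in isolation.

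Next, invoking Assumption~\ref{assumption-W(k)}(3) to decouple $F_i(k)$ (an event depending only on the $W$'s) from the observations, I would obtain the lower bound
\begin{equation*}
\alpha_{i,\mathrm{dis}}(k) \;\geq\; \mathbb{P}(F_i(k))\,\mathbb{P}_0\!\left(\tfrac{1}{k}\sum_{j=1}^{k}\eta_i(j)>0\right) \;=\; (1-P_i)^{k-1}\,\mathcal{Q}\!\left(\sqrt{2k\,{\bf{C_i}}}\right),
\end{equation*}
since under $H_0$ the isolated Gaussian statistic has mean $-4{\bf{C_i}}$ and variance $8{\bf{C_i}}/k$. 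The inequalities~\eqref{eqn-Q-ineq} yield $-\tfrac{1}{k}\log\mathcal{Q}(\sqrt{2k\,{\bf{C_i}}})\to{\bf{C_i}}$, so
\begin{equation*}
\limsup_{k\to\infty}\,-\tfrac{1}{k}\log\alpha_{i,\mathrm{dis}}(k) \;\leq\; |\log(1-P_i)|+{\bf{C_i}}.
\end{equation*}

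Finally, the symmetry argument opening the proof of Theorem~\ref{theorem-neces-suf-special-case} applies verbatim here (the $W(\cdot)$'s are hypothesis-independent and $\eta(k)\mid H_0$ has the same distribution as $-\eta(k)\mid H_1$), giving $\beta_{i,\mathrm{dis}}(k)=\alpha_{i,\mathrm{dis}}(k)$ and therefore the same $\limsup$ bound for $P^e_{i,\mathrm{dis}}(k)$. Asymptotic optimality in the sense of Definition~\ref{def-asym-optimal} forces $-\tfrac{1}{k}\log P^e_{i,\mathrm{dis}}(k)\to{\bf{C_{\mathrm{tot}}}}$, and combining this with the upper bound above yields ${\bf{C_{\mathrm{tot}}}}\leq|\log(1-P_i)|+{\bf{C_i}}$, which is the claimed necessary condition. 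The only conceptual step requiring care is the choice of the ``bottleneck'' event: $F_i(k)$ is the unique configuration that, via the stochasticity-induced self-loop $W_{ii}(s)=1$, collapses the running consensus at node $i$ to the single-sensor statistic. Once $F_i(k)$ is identified, the inductive propagation of $e_i^{\top}$ through $\Phi(k,j)$ and the $W$--$y$ independence--based decoupling are routine bookkeeping, and I anticipate no further analytic difficulty.
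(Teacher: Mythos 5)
Your proposal is correct and follows essentially the same route as the paper's own proof: the paper likewise lower-bounds $\alpha_{i,\mathrm{dis}}(k)$ by conditioning on the event that sensor $i$ is isolated at every step (whose probability is $(1-P_i)^k$, up to an immaterial $k$ versus $k-1$), uses the $\mathcal{Q}$-function bounds \eqref{eqn-Q-ineq} to extract the single-sensor exponent ${\bf{C}_i}$, and concludes via the $\alpha=\beta$ symmetry and Definition~\ref{def-asym-optimal}. The only remarks worth making are that the paper's displayed conditioning event $\cap_{s=1}^k E_i(s)$ is evidently a typo for its complement (as the factor $(1-P_i)^k$ confirms), which your event $F_i(k)$ fixes explicitly with the induction through $\Phi(k,j)$, and that, like the paper's own argument, your derivation actually yields the non-strict inequality ${\bf{C_{\mathrm{tot}}}}\leq|\log(1-P_i)|+{\bf{C}_i}$ rather than the strict one in the theorem statement.
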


  \begin{proof} Consider $\alpha_{i,\mathrm{dis}}(k)$,
   but remark that the same conclusions will
   hold for $\beta_{i,\mathrm{dis}}(k)$, as
   $\alpha_{i,\mathrm{dis}}(k)=\beta_{i,\mathrm{dis}}(k).$ (See the proof of Theorem \ref{theorem-neces-suf-special-case}.)
    Further, using inequalities \eqref{eqn-Q-ineq}, we have:
    \begin{eqnarray}
    \alpha_{i,\mathrm{dis}}(k) &=& \mathbb{P}_0 \left(  x_i(k)>0\right) \\
    &\geq& \mathbb{P}_0 \left(  x_i(k)>0| \cap_{s=1}^k E_i(s)\right)
     \mathbb{P} \left( \cap_{s=1}^k E_i(s) \right) \\
     &\geq& \frac{1}{\sqrt{2 \pi}}\, \frac{\sqrt{2 k} {\bf C}_i } {1+2k{\bf C}_i} e^{-k {\bf{C}_i}} (1-P_i)^k.
    \end{eqnarray}
    Thus, we have that
    \begin{equation}
    \label{eqn-neces-1}
    \liminf_{k \rightarrow \infty} \frac{1}{k} \log \alpha_{i,\mathrm{dis}}(k) \geq -
    \left( {\bf{C}_i} + |\log(1-P_i)| \right).
    \end{equation}
   On the other hand, we know that
   \begin{eqnarray}
   \label{eqn-neces-2}
   \liminf_{k \rightarrow \infty} \frac{1}{k} \log P^e_{i,\mathrm{dis}}(k) &\geq& \liminf_{k \rightarrow \infty} \frac{1}{k} \log \alpha_{i,\mathrm{dis}}(k)\geq -{ \bf{C}_{\mathrm{tot}}  }.
   \end{eqnarray}
   The claim of Theorem~\ref{theorem-necess-cond-general} now follows from eqns.~\eqref{eqn-neces-1}~and~\eqref{eqn-neces-2}.
   \end{proof}

\section{Simulations}
\label{section-simul}
In this section, we corroborate by simulation examples our analytical findings
on the asymptotic behavior of distributed detection over random networks. Namely,
we demonstrate the ``phase change'' behavior of distributed detection with respect to
the speed of network information flow, as predicted by Theorem \ref{theorem-sufficient-condition}. Also,
 we demonstrate that a sensor with poor connectedness to the rest of the network cannot be an optimal
 detector, as predicted by Theorem \ref{theorem-necess-cond-general}; moreover,
 its performance approaches the performance
 of an isolated sensor, i.e., a sensor that works as an individual detector, as connectedness becomes worse and worse.

\mypar{Simulation setup} We consider a supergraph $G$ with $N=40$
 nodes and $M=247$ edges.  Nodes are uniformly
 distributed on a unit square and nodes within distance less than a radius $r$
  are connected by an edge. 
As averaging weights,
    we use the standard time-varying Metropolis weights $W_{ij}(k)$, defined for
    $\{i,j\} \in E, i\neq j$, by
     $W_{ij}(k) = 1/(1+\max(d_i(k),d_j(k)))$, if the link $\{i,j\}$
      is online at time $k$, and 0 otherwise. The quantity
      $d_i(k)$ represents the
      number of neighbors (i.e., the degree) of node $i$
      at time $k$. Also, $W_{ii}(k) = 1-\sum_{j \in \Omega_i(k)}W_{ij}(k)$, for all $i$,
       and $W_{ij}(k) \equiv 0$, for $i \neq j$, $\{i,j\} \notin E$. The link failures are spatially and temporally independent.
       Each link $\{i,j\}\in E$ has the same probability of formation, i.e., the probability of being online at a time, $q_{ij}=q$. This network and weight model satisfy Assumption~\ref{assumption-W(k)}.

We assume equal prior probabilities, $\mathbb{P}(H_0)=\mathbb{P}(H_1)=0.5$. We set the $N\times 1$
 signal vector under $H_1$ (respectively, $H_0$) to be $m_1=1$ (respectively, $m_0=0$.)
 We generate randomly the covariance matrix $S$, as follows. We generate: a $N \times N$ matrix $M_S$, with
  the entries drawn independently from $U[0,1]$--the uniform distribution on $[0,1]$; we set $R_S=M_S M_S^\top$; we decompose $R_S$ via the eigenvalue
  decomposition: $R_S = Q_S \Lambda_S Q_S^\top$; we generate a $N \times 1$
  vector $u_S$ with the entries drawn independently from $U[0,1]$; finally, we set
  $S = \alpha_S\,Q_S \mathrm{Diag}(u_S) Q_S^\top$, where $\alpha_S>0$
   is a parameter. For the optimal centralized detector, we evaluate
$P^e_{\mathrm{cen}}(k)$ by formula~\eqref{eqn-theorem-chernoff-lemma}.
For the distributed detector, we
evaluate $P^e_{i,\mathrm{dis}}(k)$ by Monte Carlo simulations
 with 20,000 sample paths (20,000 for each hypothesis $H_l$, $l=0,1$)
 of the running consensus algorithm.

\mypar{Exponential rate of decay of the error probability vs. the speed of information flow} First, we examine the asymptotic behavior
of distributed detection when the speed of network information flow varies, i.e., when $r$ varies.
(We recall that $r:=\lambda_2\left( \mathbb{E} \left[ W(k)^2 \right] \right)$.) To this end,
 we fix the supergraph, and then we vary the formation probability of links $q$ from 0 to 0.75. Figure \ref{Figure_phase change} (bottom left) plots the estimated exponential rate of decay, averaged across sensors, versus
    $q$. Figure \ref{Figure_phase change} (bottom right) plots the same estimated exponential rate of decay versus $1-r$. We can see that there is a ``phase change'' behavior, as predicted by Theorem \ref{theorem-sufficient-condition}.
     For $q$ greater than 0.1, i.e.,
     for $1-r > 0.25$, the rate of decay of error probability is approximately the same
     as for the optimal centralized detector ${\bf{C_{\mathrm{tot}}}}$--the simulation estimate of ${\bf{C_{\mathrm{tot}}}}$ is 0.0106.
     \footnote{In this
     numerical example, the theoretical value of ${\bf{C_{\mathrm{tot}}}}$ is 0.009. The estimated value
     shows an error because the decay of the error probability, for the centralized detection,
      and for distributed detection with a large $q$, tends to slow down slightly when $k$ is very large; this effect is not
      completely captured by simulation with $k < 700.$} For $q<0.1$, i.e., for $1-r<0.25$, detection performance becomes worse and worse as $q$ decreases.
     Figure \ref{Figure_phase change} (top) plots the estimated error probability, averaged across sensors,
for different values of $q$. We can see that the curves are ``stretched''
for small values of $q$; after $q$ exceeds a threshold (on the order of $0.1$,)
 the curves cluster, and they have approximately the same slope (the error probability has approximately the same decay rate,) equal to the optimal slope.

\mypar{Study of a sensor with poor connectivity to the rest of the network}
Next, we demonstrate that a sensor with poor connectivity to the rest of the network
cannot be an asymptotically optimal detector, as predicted by Theorem \ref{theorem-necess-cond-general}; its
performance approaches the performance of an individual detector-sensor, when its connectivity
  becomes worse and worse. For $i$-th individual detector-sensor (no cooperation between sensors,)
   it is easy to show that the
   Bayes probability of error, $P^e_{i,\mathrm{no\,\,cooper.}}(k)$ equals:
   $
   P^e_{i,\mathrm{no\,\,cooper.}}(k) = \mathcal{Q} \left(  \sqrt{k} \frac{m_{i,\mathrm{no\,\,cooper.}}}{\sigma_{\mathrm{i,no\,\,cooper.}}}\right),
   $
   where $m_{i,\mathrm{no\,\,cooper.}}=\frac{1}{2}\frac{[m_1]_i^2}{S_{ii}}$, and
   $\sigma_{\mathrm{i,no\,\,cooper.}}^2 = \frac{[m_1]_i^2}{S_{ii}}.$
   It is easy to show that the Chernoff information
   (equal to $\lim_{k \rightarrow \infty}\frac{1}{k} \log P^e_{i,\mathrm{no\,\,cooper.}}(k)$) for sensor $i$,
    in the absence of cooperation, is given by $\frac{1}{8}\frac{[m_1]_i^2}{S_{ii}}$.

We now detail the simulation setup. We consider a supergraph with $N=35$ nodes and $M=263$ edges. We initially generate the supergraph
as a geometric disc graph, but then we isolate sensor 35 from the rest of the network,
by keeping it connected only to sensor 3. We then vary the formation probability
 of the link $\{3,35\}$, $q_{3,35}$, from 0.05 to 0.5 (see Figure \ref{Figure_isolated_sensor}.) All other
 links in the supergraph have the formation probability of 0.8.
 Figure \ref{Figure_isolated_sensor} plots
 the error probability for: 1) the optimal centralized detection; 2)
 distributed detection at each sensor, with cooperation (running consensus;) and 3)
 detection at each sensor, without cooperation (sensors do not communicate.) Figure \ref{Figure_isolated_sensor} shows that, when $q_{3,35}=0.05$, sensor 35
 behaves almost as bad as the individual sensors, that do not communicate (cooperate) with each other. As $q$ increases, the performance of sensor 35 gradually improves.

\begin{figure}[thpb]
      \centering
      \includegraphics[height=2.7in,width=3.99in]{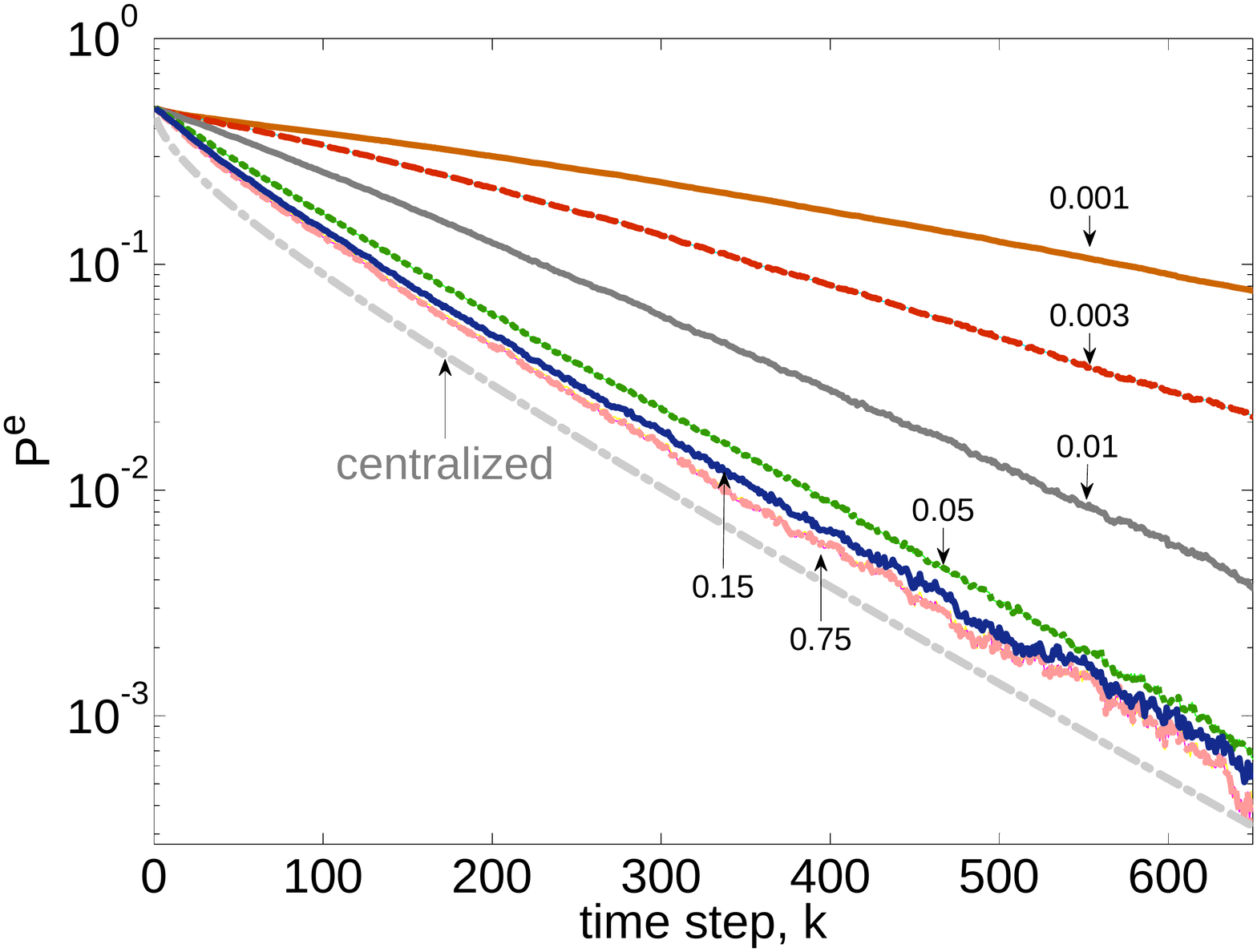}
      \includegraphics[height=2.in,width=3.1in]{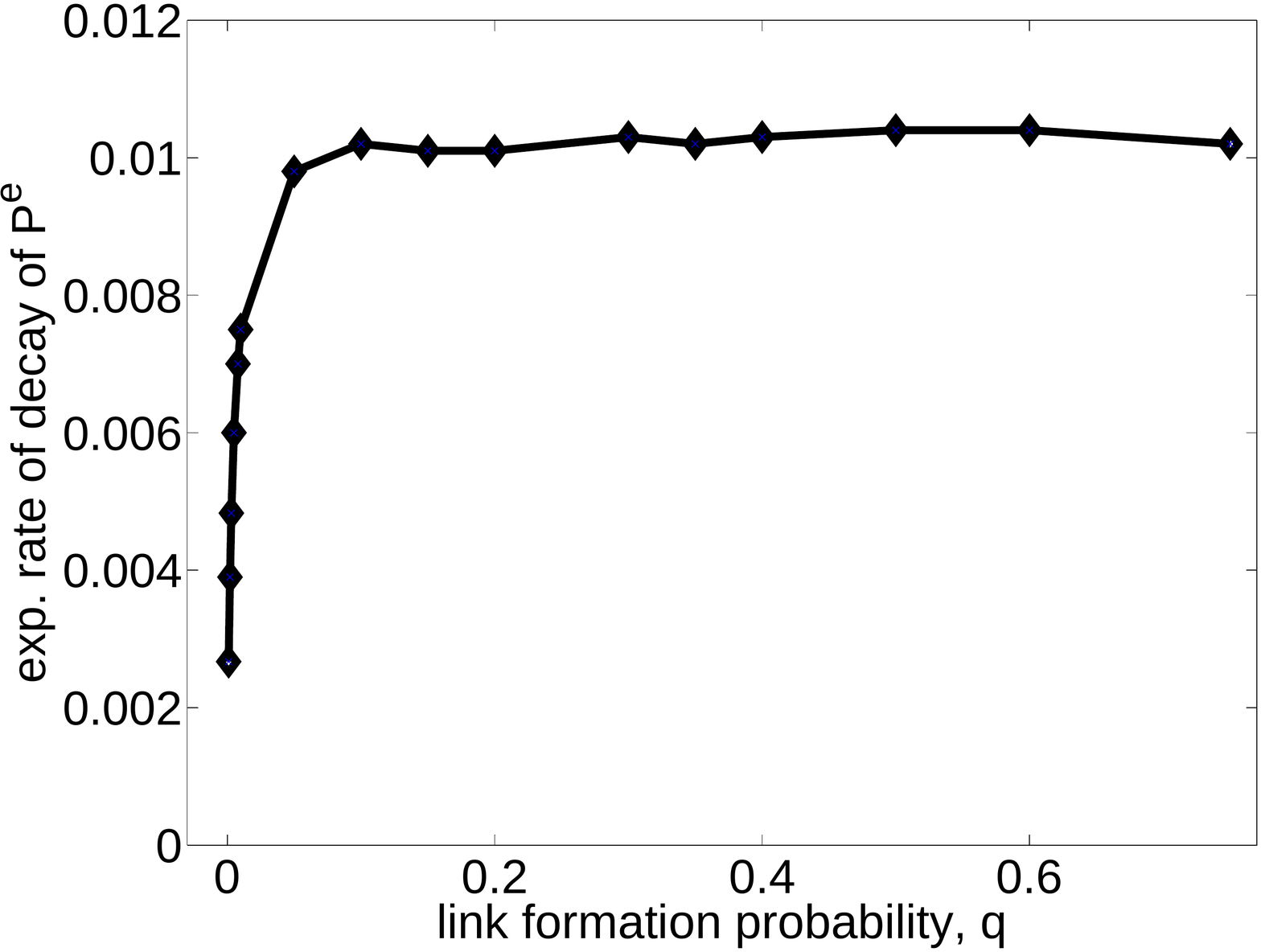}
      \includegraphics[height=2.in,width=3.1in]{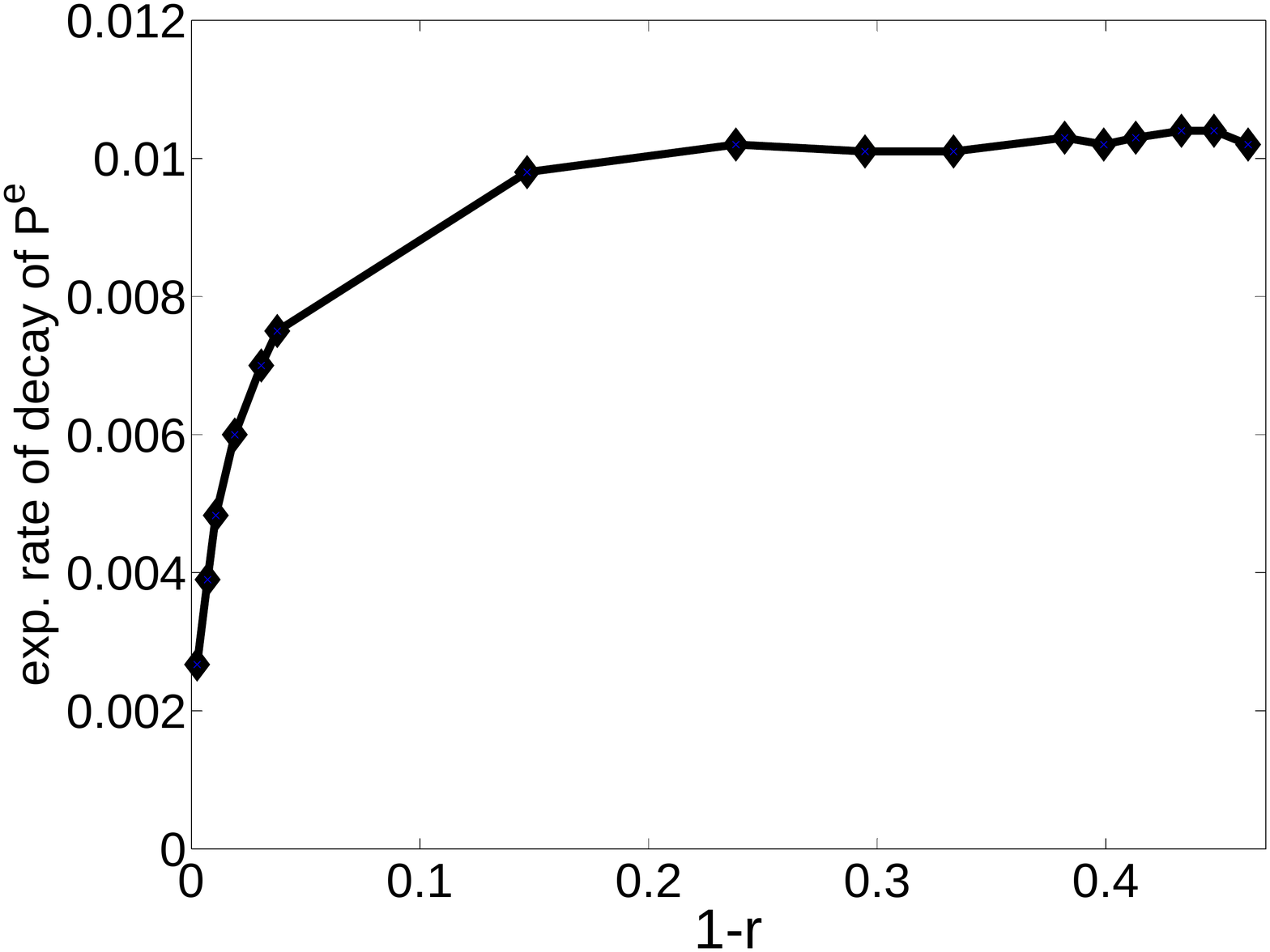}
      \caption{ Monte Carlo estimate of the performance of distributed detection for different values
      of the link formation probability $q$. Top: Error probability averaged across $N$ sensors.
      Each line is labeled with the value of $q$; performance of centralized detection is plotted in gray. Bottom left (respectively, right): Estimated exponential rate of decay of the error probability vs. $q$ (respectively, $1-r$). Recall that
      $r:=\lambda_2\left( \mathbb{E} \left[ W(k)^2 \right] \right)$.}
      \label{Figure_phase change}
\end{figure}

\begin{figure}[thpb]
      \centering
      \includegraphics[height=2.in,width=3.1in]{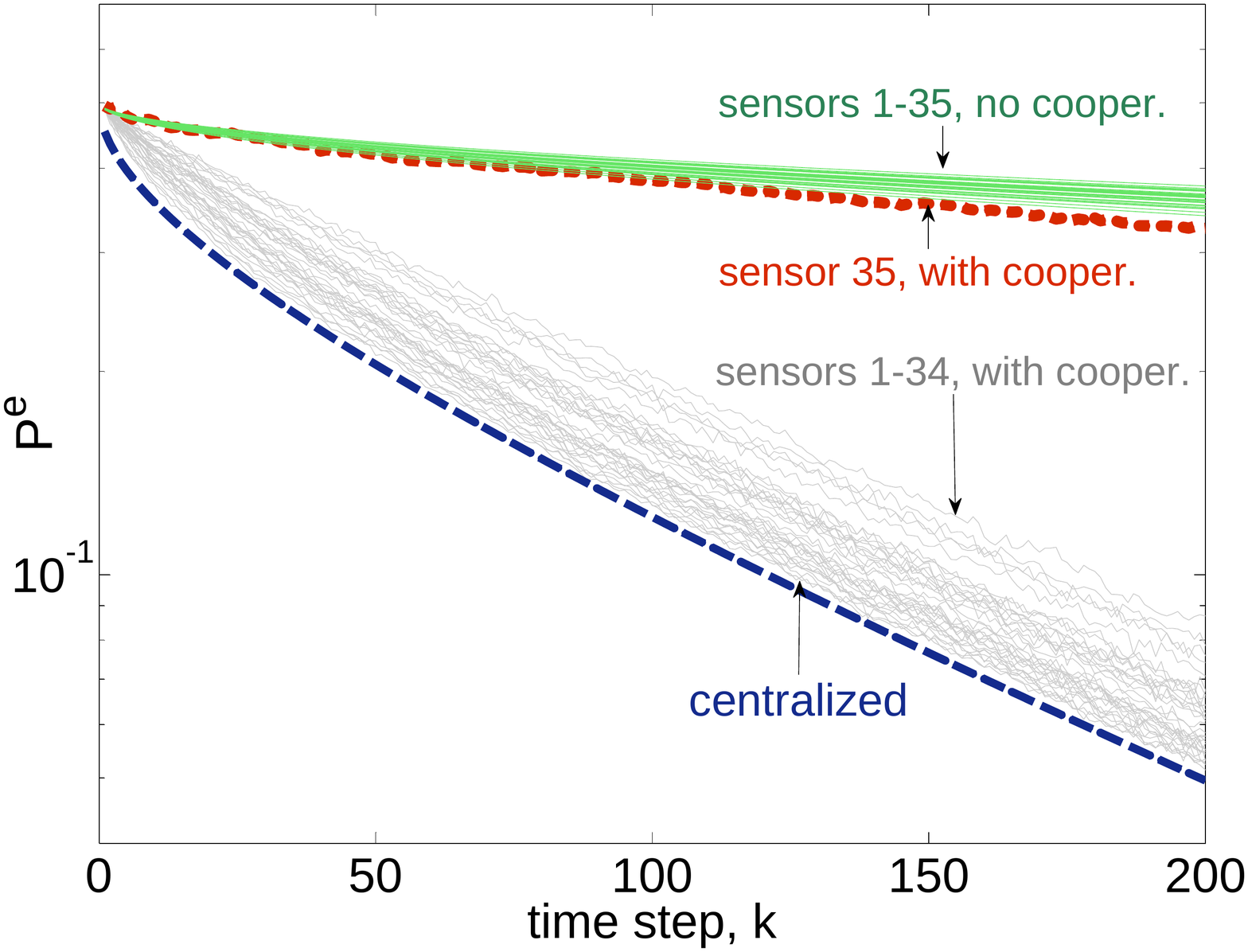}
      \includegraphics[height=2.in,width=3.1in]{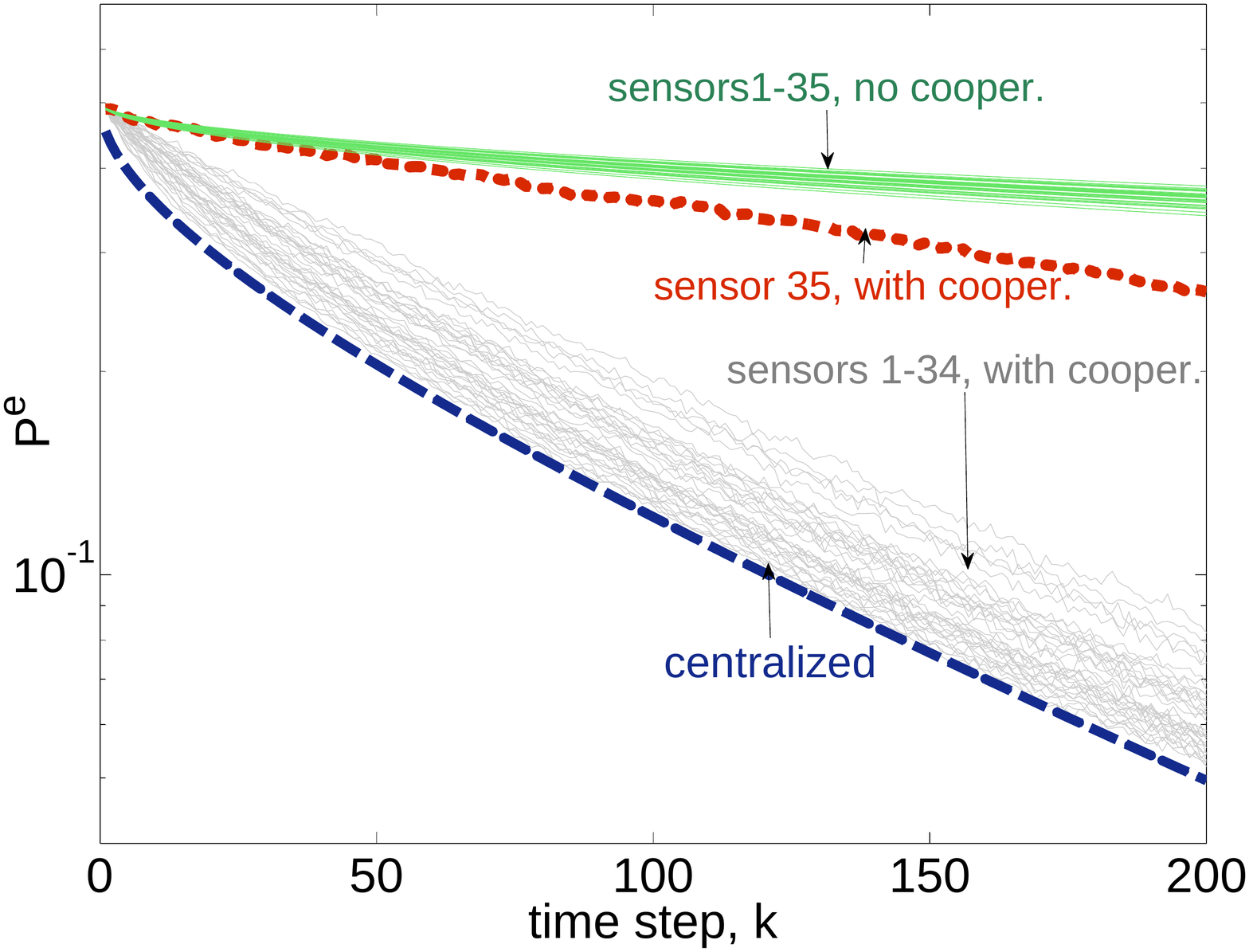}
      \includegraphics[height=2.in,width=3.1in]{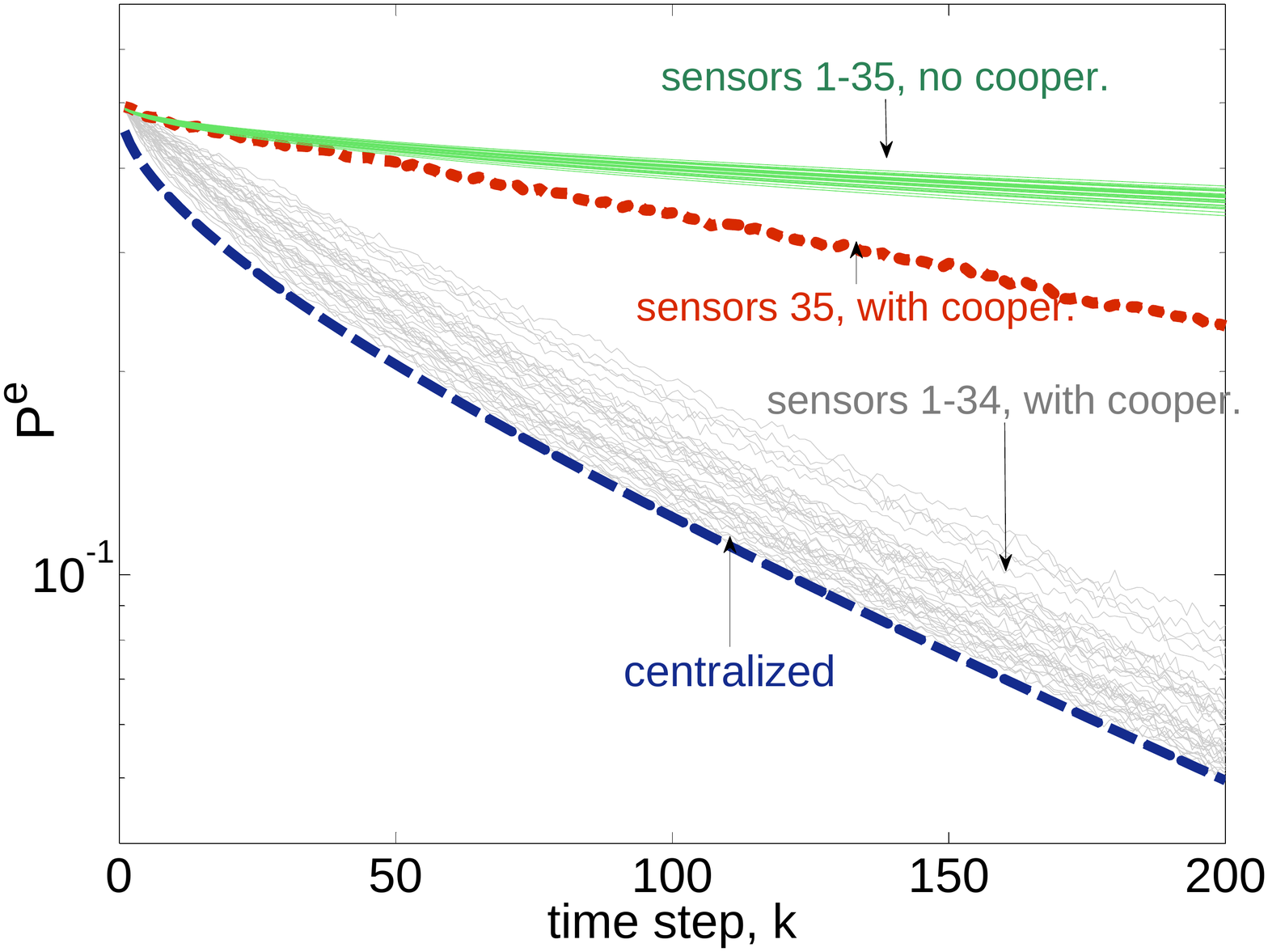}
      \includegraphics[height=2.in,width=3.1in]{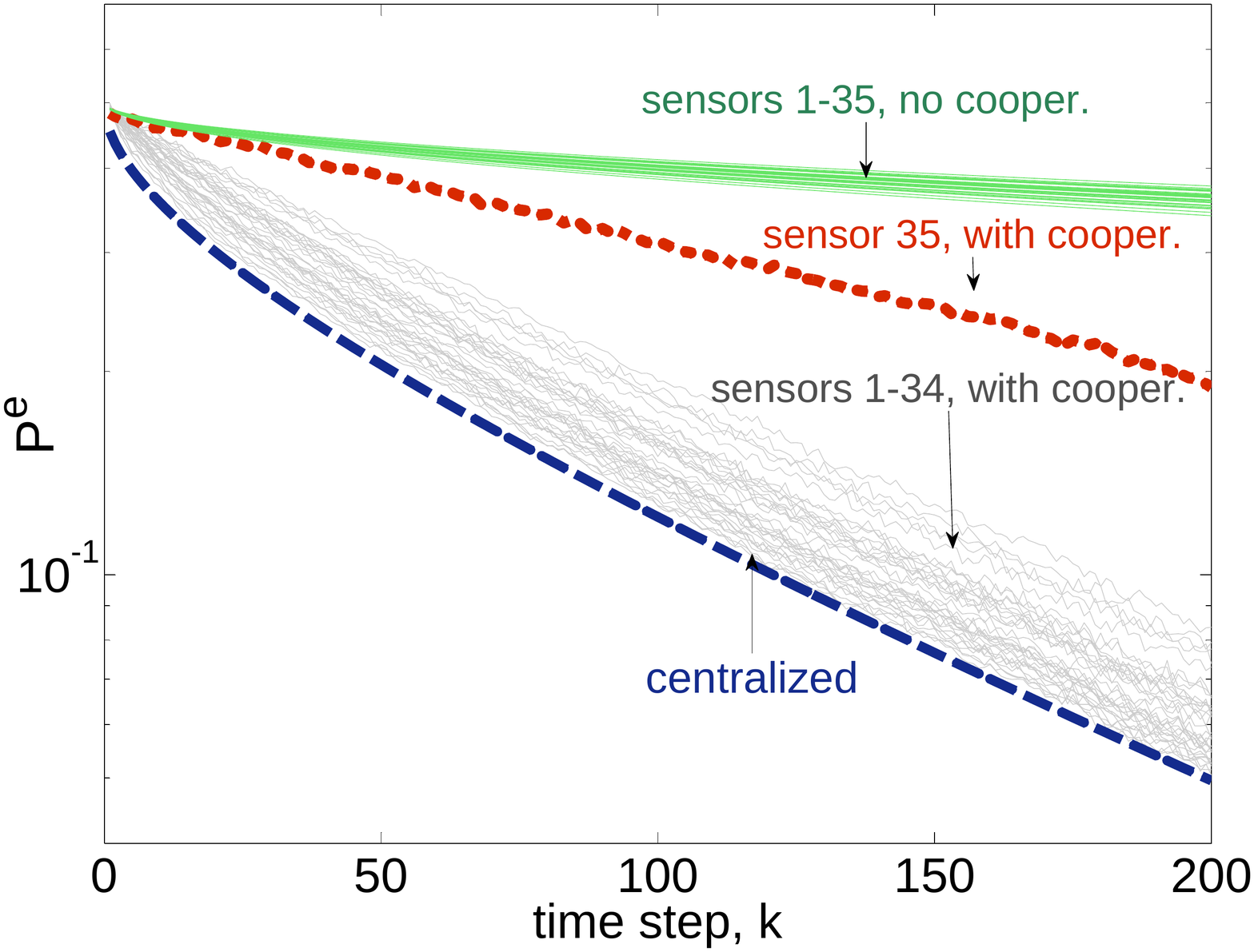}
      \caption{ Error probability averaged across sensors for the optimal centralized detection,
      distributed detection at each sensor (with cooperation), and detection at each sensor, without cooperation.
      The formation probability $q_{3,35}$ of the link $\{3,35\}$ varies between 0.05 and 0.5:
      $q_{3,35}$=0.05 (top right); 0.2 (top left); 0.3 (bottom left); 0.5 (bottom right).}
      \label{Figure_isolated_sensor}
\end{figure}

\section{Conclusion}
We studied distributed detection over random networks when
at each time step $k$ each sensor: 1) averages its decision variable
with the neighbors' decision variables; and 2) accounts
on-the-fly for its new observation. We analyzed
how asymptotic detection performance, i.e., the exponential decay rate of the error
probability, depends on the random network connectivity, i.e., on the speed of information flow across network.
 We showed that distributed detection exhibits a ``phase change.'' Namely, distributed detection is asymptotically optimal,
  if the network speed of information flow is above a Chernoff information dependent threshold.
  When below the threshold, we find a lower bound on the achievable performance (the exponential rate of decay of error probability,) as a function of the network connectivity and the Chernoff information. Simulation examples
  demonstrate our theoretical findings on the asymptotic performance of distributed detection.
\vspace{-20mm}
\appendix

\section{Appendix}

\subsection{Proof of inequalities \eqref{eqn-hat-phi-limsup}}
Recall $\chi(l;k)$ in eqn. \eqref{eqn-def-chi}. Using eqn. \eqref{eqn-alpha-total-prob-law},
and noting that $(1-p)^{k-1} \geq p(1-p)^{k-1}$, we bound $\alpha_{i,\mathrm{dis}}(k)$
 from below as follows:
\begin{eqnarray}
\label{eqn-inequalities}
\alpha_{i,\mathrm{dis}}(k) &\geq& \max_{l=0,...,k-1} \,\mathcal{Q} (\chi(l;k)) \, p(1-p)^{k-l-1} \\
\label{eqn-ineq-ova}
&\geq& \frac{1}{\sqrt{2 \pi }}\,\max_{l=0,...,k-1} \, \frac{\chi(l;k)}{1+\chi(l;k)^2} e^{-\frac{1}{2}\chi(l;k)^2} \,p(1-p)^{k-l-1}\\
&\geq& \frac{1}{\sqrt{2 \pi }}\,\frac{\sqrt{2k{\bf{C_i}}}}{1+2kN{\bf{C_i}}} \max_{l=0,...,k-1} e^{-\frac{1}{2}\chi(l;k)^2} \,p(1-p)^{k-l-1}\\
&=&    \frac{1}{\sqrt{2 \pi }}\,\frac{\sqrt{2k{\bf{C_i}}}}{1+2kN{\bf{C_i}}}
       \max_{l=0,...,k-1}  p \,e^{-\frac{k {\bf{C_{\mathrm{tot}}}}} {1+(N-1)(1-\frac{l}{k})} - (k-l-1)|\log(1-p)|}\\
&=&
\label{eqn-equal-ova}
\frac{1}{\sqrt{2 \pi }}\,\frac{\sqrt{2k{\bf{C_i}}}}{1+2kN{\bf{C_i}}}
       \max_{j=0,...,k-1}  p \,e^{-\frac{k {\bf{C_{\mathrm{tot}}}}} {1+(N-1)\frac{(j+1)}{k}} - j |\log(1-p)|}\\
       &=&
\label{eqn-equal-phi-last}
\frac{1}{\sqrt{2 \pi }}\,\frac{\sqrt{2k{\bf{C_i}}}}{1+2kN{\bf{C_i}}}
       \max_{j=0,...,k-1}  p \,e^{-k\,\phi(j;k)}.
\end{eqnarray}
Inequality in eqn.~\eqref{eqn-ineq-ova} is due to~\eqref{eqn-Q-ineq}, equality
 in~\eqref{eqn-equal-ova} is by letting $j=k-l-1,$ and equality
 in~\eqref{eqn-equal-phi-last} is by definition of $\phi(j;k)$ in eqn.~\eqref{eqn-def-phi-j-k}. Taking $\liminf_{k \rightarrow \infty} \frac{1}{k} \log \alpha_{i,\mathrm{dis}}(k)$, and using \eqref{eqn-liminf-p-e}, \eqref{eqn-inequalities}-\eqref{eqn-equal-phi-last} yields:
\begin{eqnarray}
\liminf_{k \rightarrow \infty} \frac{1}{k} \log P^e_{i,\mathrm{dis}}(k) &\geq& \liminf_{k \rightarrow \infty} \frac{1}{k} \log \alpha_{i,\mathrm{dis}}(k)
\geq
 - \limsup_{k \rightarrow \infty} \left\{  \min_{j=0,...,k-1} \phi(j;k) \right\} .
\end{eqnarray}

Next, using eqn. \eqref{eqn-alpha-total-prob-law}, we bound $\mathbb{P}_0 \left( x_i(k)>0  \right)$
 from above as follows:
\begin{eqnarray}
\label{eqn-inequalities-2}
\alpha_{i,\mathrm{dis}}(k) &\leq& k\,\max_{l=0,...,k-1} \,\mathcal{Q} (\chi(l;k)) \, (1-p)^{k-l-1} \\
\label{eqn-ineq-ova-2}
&\leq& k\,\frac{1}{\sqrt{2 \pi}}\,  \max_{l=0,...,k-1} \, \frac{1}{\chi(l;k)} e^{-\frac{1}{2}\chi(l;k)^2} \,(1-p)^{k-l-1}\\
&\leq& k\,\frac{1}{\sqrt{2 \pi}}\,  \frac{1}{\sqrt{2{\bf{C_i}}k}} \max_{l=0,...,k-1} e^{-\frac{1}{2}\chi(l;k)^2} \,(1-p)^{k-l-1}\\
&=&    k\,\frac{1}{\sqrt{2 \pi}}\,  \frac{1}{\sqrt{2{\bf{C_i}}k}}\,
       \max_{l=0,...,k-1} \,  \,e^{-\frac{k {\bf{C_{\mathrm{tot}}}}} {1+(N-1)(1-\frac{l}{k})} - (k-l-1)|\log(1-p)|}\\
       &=&
\label{eqn-equal-ova-2}
k\, \frac{1}{\sqrt{2 \pi}}\, \frac{1}{\sqrt{2{\bf{C_i}}k}}\,
       \max_{j=0,...,k-1} \,e^{-\frac{k {\bf{C_{\mathrm{tot}}}}} {1+(N-1)\frac{j}{k}} - j |\log(1-p)|}\\
       &=&
\label{eqn-zadnja-def-phi}
k\, \frac{1}{\sqrt{2 \pi}}\, \frac{1}{\sqrt{2{\bf{C_i}}k}}\,
       \max_{j=0,...,k-1} \,e^{-k \,\phi(j;k)}.
\end{eqnarray}
Inequality in~\eqref{eqn-ineq-ova-2} is due~\eqref{eqn-Q-ineq}, and equality
 in~\eqref{eqn-zadnja-def-phi} is by definition of $\phi(j;k)$ in eqn.~\eqref{eqn-def-phi-j-k}. We complete the proof by noting that
 equality in \eqref{eqn-limsup-p-e} and inequalities \eqref{eqn-inequalities-2}-\eqref{eqn-equal-ova-2} imply that:
\begin{equation}
\limsup_{k \rightarrow \infty} \frac{1}{k} \log P^e_{i,\mathrm{dis}}(k) = \limsup_{k \rightarrow \infty} \frac{1}{k} \log \alpha_{i,\mathrm{dis}}(k) \leq
 - \liminf_{k \rightarrow \infty} \left\{  \min_{j \in \{0,...,k-1\}} \phi(j;k) \right\}.
\end{equation}
\vspace{-5mm}
\subsection{Proof of Lemma \ref{lemma-phi-epsilon}}
Consider the random vector $z^{(i)}(k,j)=\widetilde{\Phi}(k,j)e_i$, which is equal to the $i$-th
 column of the matrix $\widetilde{\Phi}(k,j)$. First, by the Markov inequality, we have:
 \begin{equation}
 \label{eqn-markov-ineq}
 \mathbb{P} \left(  \| z^{(i)}(k,j) \|  > \epsilon \right)
 \leq
\frac{1}{\epsilon^2} \mathbb{E} \left[ z^{(i)}(k,j)^\top z^{(i)}(k,j)\right].
 \end{equation}
 Now, we can bound $\mathbb{E} \left[ z^{(i)}(k,j)^\top z^{(i)}(k,j)\right]$
  as follows:
 \begin{eqnarray}
 \label{eqn-bounding}
 \mathbb{E} \left[ z^{(i)}(k,j)^\top z^{(i)}(k,j)\right]
 &=&  \mathbb{E} \left[ \mathbb{E} \left[ e_i^\top \widetilde{\Phi}(k-1,j)^\top\widetilde{W}(k-1)\widetilde{W}(k-1)
 \widetilde{\Phi}(k-1,j)e_i\,|\widetilde{W}(j),...,\widetilde{W}(k-2)\right]  \right]\nonumber\\
 &=&   \mathbb{E} \left[e_i^\top \widetilde{\Phi}(k-1,j)^\top\mathbb{E}\left[\widetilde{W}(k-1)\widetilde{W}(k-1)\right]\widetilde{\Phi}(k-1,j)
 e_i|\,\widetilde{W}(j),...,\widetilde{W}(k-2)\right]  \nonumber\\
 &\leq&  \|\mathbb{E}\left[\widetilde{W}(k-1)^2\right] \| \,\cdot\,
 \mathbb{E}\left[ e_i^\top \widetilde{\Phi}(k-1,j)^\top\widetilde{\Phi}(k-1,j)e_i\right].\nonumber\\
 &=& r \,
 \mathbb{E}\left[z^{(i)}(k-1,j)^\top z^{(i)}(k-1,j)\right].
 \end{eqnarray}
Now, bounding successively $\mathbb{E}\left[z^{(i)}(k-s,j)^\top z^{(i)}(k-s,j)\right]$ for
 $s=1,...,k-j$, as in eqn.~\eqref{eqn-bounding}, we obtain:
 \begin{equation}
  \mathbb{E} \left[ z^{(i)}(k,j)^\top z^{(i)}(k,j)\right] \leq r^{k-j},
 \end{equation}
 which yields (by eqn.~\eqref{eqn-markov-ineq}:)
 \begin{equation}
 \mathbb{P} \left(  \| z^{(i)}(k,j) \|  > \epsilon \right)
\leq
\frac{1}{\epsilon^2} r^{k-j}, \,i=1,...,N.
\end{equation}
Now, observe that $\|z^{(i)}(k,j)\|_1 \leq N \|z^{(i)}(k,j)\|$, and hence,
\[
\mathbb{P} \left(  \| z^{(i)}(k,j) \|_1  > \epsilon \right)
\leq \mathbb{P} \left(  \| z^{(i)}(k,j) \|  > \frac{\epsilon}{N} \right).
\]
Further, in view of the matrix norm inequality $\|B\| \leq \sqrt{N}\|B\|_1$, we have:
\begin{eqnarray}
\mathbb{P} \left(  \| \widetilde{\Phi}(k,j) \|  > \epsilon \right)
&\leq &  \mathbb{P} \left(  \| \widetilde{\Phi}(k,j) \|_1  > \frac{\epsilon}{\sqrt{N}} \right)=    \mathbb{P} \left(  \max_{i=1,...,N} \| z^{(i)} (k,j)\|_1  > \frac{\epsilon}{\sqrt{N}} \right) \nonumber
\\
&=&     \mathbb{P}  \left(  \cup_{i=1}^N \{  \| z^{(i)} (k,j)\|_1  > \frac{\epsilon}{\sqrt{N}}  \right)
 \leq
 \sum_{i=1}^N   \mathbb{P} \left( \| z^{(i)} (k,j)\|_1  > \frac{\epsilon}{\sqrt{N}} \right) \leq  \frac{N^4}{\epsilon^2} r^{k-j}. \nonumber
\end{eqnarray}

\vspace{-4mm}
\subsection{Proof of Lemma \ref{lemma-event-a-l}}

First, note that, if $\mathcal{A}_j$ occurred, we have that:
\[
\| \widetilde{\Phi}(k,s) \| \leq \epsilon, \,\forall s \leq k-jB,
\]
since $
\|\widetilde{\Phi}(k,s) \| = \| [\widetilde{\Phi}(k,k-jB) \widetilde{\Phi}(k-jB-1,s) \| \leq
\| \widetilde{\Phi}(k,k-jB) \| \| \widetilde{\Phi}(k-jB-1,s) \|,
$
and $\| [\widetilde{\Phi}(k,k-jB) \| \leq \epsilon$ and $\| \widetilde{\Phi}(k-jB-1,s) \| \leq 1.$ Inequality \eqref{eqn-prva} holds true, because:
\begin{eqnarray*}
\sum_{l=1}^N |[\widetilde{\Phi}(k,s)]_{il}| &\leq& N \, \max_{i,l=1,...,N} | [\widetilde{\Phi}(k,s)]_{il} |,
\end{eqnarray*}
and $
\max_{i,l=1,...,N} | [\widetilde{\Phi}(k,s)]_{il} | \leq \sqrt{N} \, \|\widetilde{\Phi}(k,s)\| \leq \sqrt{N} \epsilon.
$
Inequality \eqref{eqn-druga} holds true, because, for $s \leq k-jB$,
\[
\sum_{l=1}^N |[\widetilde{\Phi}(k,s)]_{il}|^2 \leq \|\widetilde{\Phi}(k,s)\|_F^2 \leq N \|\widetilde{\Phi}(k,s)\|^2 = N \epsilon^2.
\]

\subsection{Proof of eqn. \eqref{eqn-append}}

Consider the Chernoff bound on $\alpha_{i,\mathrm{dis}}(k)$, given by:
\begin{eqnarray}
\mathcal{C} \left( k\mu \right) &:=& {\mathbb{E}}_0 \left[  \mathrm{exp}\,\left(k \mu \,x_i(k)\right)
   \right]
=   {\mathbb E}_0 \left[  \mathrm{exp}\, \left( k \lambda^\top x(k) \right)
   \right] ,
\end{eqnarray}
where $\lambda=\mu\,e_i$, $\lambda \in {\mathbb R}^N$.

Further, we have:
\begin{eqnarray}
\label{eqn-term-to-drop}
\mathcal{C} \left( k\mu \right)
&=&
 {\mathbb E}_0
\left[
\mathrm{exp}\, \left(    \lambda^\top \sum_{j=1}^{k-1} {\Phi}(k, j) \eta(j)  + \, \lambda^\top \eta(k)      \right)
\right]
\\
&=&
  {\mathbb E}_0
\left[
\mathrm{exp}\, \left(    \lambda^\top \sum_{j=1}^{k-1} {\Phi}(k, j) \eta(j)        \right)
\right]\,  {\mathbb E}_0 \left[  \mathrm{exp} \left(  \lambda^\top \eta(k) \right)  \right]
, \nonumber
\end{eqnarray}
where the last equality holds because $\eta(k)$ is independent from
$\eta(j)$ and $W(j)$, $j=1,...,k-1$. We will be interested in computing $\limsup_{k \rightarrow \infty} \frac{1}{k} \Lambda_k^{(l)}(k\,\mu)$, for
all $\mu \in {\mathbb R}$; with this respect,
remark that
\[
\lim_{k \rightarrow \infty} \frac{1}{k} \log   {\mathbb E}_0 \left[  \mathrm{exp} \left(  \lambda^\top \eta(k) \right) \right] =0,
\]
for all $\lambda \in {\mathbb R}^N$, because $\eta(k)$ is a Gaussian random variable
and hence it has finite log-moment generating function at any point $\lambda$.

Thus, we have that $\limsup_{k \rightarrow \infty} \frac{1}{k} \log \mathcal{C} \left( k\mu \right)=
\limsup_{k \rightarrow \infty} \frac{1}{k} \log \mathcal{C}^\prime(k\mu)$, where
\[ \mathcal{C}^\prime(k\mu)=
  {\mathbb E}_0
\left[
\mathrm{exp}\, \left(  \,  \lambda^\top \sum_{j=1}^{k-1} {\Phi}(k, j) \eta(j)        \right) \right] .
\]
We thus proceed with the computation of $\mathcal{C}^\prime(k\mu)$.
Conditioned on $W(1),W(2),...,W(k-1)$,
 the random variables $ \lambda^\top \Phi(k,j) \eta(j)$, $j=1,...,k-1$,
  are independent; moreover, they are Gaussian random variables,
  as linear transformation of the Gaussian variables $\eta(j)$. Recall that
   $m_{\eta}^{(0)}$ and $S^{\eta}$ denote the mean and the covariance of
   $\eta(k)$ under hypothesis $H_0$. After conditioning on $W(1),W(2),...,W(k-1)$, using the independence of $\eta(j)$ and $\eta(s)$ $s \neq j$,
   and using the expression for the moment generating function of $\eta(j)$, we obtain successively:
\begin{eqnarray}
\mathcal{C}^\prime(\mu) &=&
 \mathbb E  \left[
{\mathbb{E}}_0 \left[    \mathrm{exp} \, \left(  \,\lambda^\top \sum_{j=1}^{k-1} {\Phi}(k, j) \eta(j)  \right)
     \right]
| W(1),...,W(k-1)
\right]             \nonumber\\
&=&
  \mathbb E  \left[{\mathbb E}_0  \left[
\Pi_{j=1}^{k-1} \mathrm{exp}  \left( \,\lambda^\top  \Phi(k,j) \eta(j)   \right)
\right]
| W(1),...,W(k-1)
\right]\nonumber\\
&=& \mathbb E \left[
\Pi_{j=1}^{k-1}
\mathbb{E} \left[
\mathrm{exp} \,  \left(   \,\lambda^\top \Phi(k,j)\, m_{\eta}^{(l)}   \right)
\mathrm{exp} \,  \left(  \frac{1}{2} \lambda^\top \Phi(k,j)^\top S^{\eta} \Phi(k,j) \lambda \right)
\right] \right]   \nonumber\\
&=&
\mathbb{E}\,\, [
\Pi_{j=1}^{k-1}\,\, \mathrm{exp}\, \left( \,\lambda^\top \left(\widetilde{\Phi}(k,j)+J\right)\, m_{\eta}^{(0)}\right) \nonumber \\
&\,&\mathrm{exp}\, \left( \frac{1}{2}  \left( J + \widetilde{\Phi}(k,j) \right)^\top
S^{\eta}
\left(  J + \widetilde{\Phi}(k,j) \right)      \right)
]
. \nonumber
\end{eqnarray}

Denote further:
\begin{eqnarray}
\label{eqn-delta(k)}
\delta(k\mu) &:=&
\mathbb{E}  \mathrm{[\,}
\mathrm{exp} \left( \lambda^\top \sum_{j=1}^{k-1} \widetilde{\Phi}(k,j) m_{\eta}^{(l)} \right)\,
\mathrm{exp} \left(  \frac{1}{2}
\lambda^\top \left( \sum_{j=1}^{k-1} \widetilde{\Phi}(k,j)^\top S^{\eta}
\widetilde{\Phi}(k,j)
 \right)  \lambda
\right)\,\\
&\,&\mathrm{exp} \left(\frac{1}{2} \lambda^\top J S^{\eta} \sum_{j=1}^{k-1}
\widetilde{\Phi}(k,j) \lambda \right)\,
\mathrm{exp} \left(\frac{1}{2} \lambda^\top
\sum_{j=1}^{k-1}
\widetilde{\Phi}(k,j)^\top S^{\eta} J \lambda \right) \,
\mathrm{]}, \nonumber\\
\overline{\mathcal{C}}(k\mu)&:=& \mathrm{exp}\left( (k-1) \left(\lambda^\top J m_{\eta}^{(l)} - \frac{1}{2} \lambda^\top J S^{\eta} J \lambda\right)\right),
\end{eqnarray}
where dependence on $H_0$ is dropped in the definition of $\delta(k\mu)$.
Then, it is easy to see that $\mathcal{C}^\prime(k \mu)= \overline{\mathcal{C}}(k\mu) \delta(k \mu)$.

Recall the expressions for $v$, $m_L^{(0)}$ and $\sigma_L^2$ in eqns. \eqref{eqn-m_L-sigma-L}, \eqref{eqn-def-v}.
After straightforward algebra, it can be shown that $\overline{\mathcal{C}}(k\mu)$ equals the expression in eqn.~\eqref{eqn-C-cal-over}.
Remak further that $\delta(k \mu)$ equals:
\begin{eqnarray}
\delta(k\mu) = \mathbb{E} \left[ \mathrm{exp}\left((\mu - \mu^2) e_i^\top
\sum_{j=1}^{k-1} \widetilde{\Phi}(k,j) m_{\eta}^{(0)} \right) \mathrm{exp}
\left(  \frac{\mu^2}{2} e_i^\top \sum_{j=1}^{k-1} \widetilde{\Phi}(k,j)S^{\eta}\widetilde{\Phi}(k,j)^\top \right) e_i\right].
\end{eqnarray}
Recall that $\overline{m}_0:=\max_{i=1,...,N}|[m_{\eta}^{(0)}]_i|$. Now, $\delta(k \mu)$ can be bounded from above as follows:

\begin{eqnarray}
\delta(k\mu) &\leq& \mathbb{E} \left[ \mathrm{exp}\left(|\mu - \mu^2|\,
\overline{m}_0 \sum_{j=1}^{k-1} \sum_{l=1}^{N} \,| [\widetilde{\Phi}(k,j)]_{il}|\right)\,
 \mathrm{exp}
\left(  \frac{\mu^2}{2} \|S^{\eta}\| \sum_{j=1}^{k-1} |[\widetilde{\Phi}(k,j)]_{il}]|^2 \right)\right]\\
&\leq&
\mathbb{E} \left[ \mathrm{exp}\left(|\mu - \mu^2|\,
\overline{m}_0 \sum_{j=1}^{k} \sum_{l=1}^{N} \,| [\widetilde{\Phi}(k,j)]_{il}|\right)\,
 \mathrm{exp}
\left(  \frac{\mu^2}{2} \|S^{\eta}\| \sum_{j=1}^{k} |[\widetilde{\Phi}(k,j)]_{il}]|^2 \right)\right].\nonumber
\end{eqnarray}

Using Lemmas \ref{lemma-event-a-l}, \ref{lemma-phi-epsilon} and \ref{lemma-stoch-matr},
using the total probability with the partition $\mathcal{A}_j$, $j=1,...,\mathcal{J}+1$, we get the inequality
 \eqref{eqn-append}.

\section{Summary}

\vspace{-4mm}
\bibliographystyle{IEEEtran}
\bibliography{IEEEabrv,LDPBibliography}
\end{document}